\newtheorem{definition}{Definition}
\newtheorem{proposition}{Proposition}
\newtheorem{theorem}{Theorem}
\newtheorem{remark}{Remark}
\newtheorem{example}{Example}
\newcommand{\act}[1]{\xlongrightarrow{#1}}          % transition of with action type [1]
\newcommand{\norm}[1]{\lVert #1 \rVert_\infty}
\newcommand{\normo}[1]{\lVert #1 \rVert_1}
\newcommand{\tha}{\hat{t}}
\newcommand{\hh}{\hat{H}}
\newcommand{\hr}{\hat{r}}
\newcommand{\tr}{\tilde{r}}
\newcommand{\RE}{\mathbb{R}}
\newcommand{\REz}{\mathbb{R}_{\geq0}}
\newcommand{\calE}{\mathcal{E}}
\newcommand{\calU}{\mathcal{U}}
\newcommand{\calF}{\mathcal{F}}
\newcommand{\calR}{\mathcal{R}}
\newcommand{\calC}{\mathcal{C}}
\newcommand{\calK}{\mathcal{K}}
\newcommand{\calP}{\mathcal{P}}
\newcommand{\calI}{\mathcal{I}}
\newcommand{\cals}{\mathcal{S}}
\newcommand{\calW}{\mathcal{W}}
\DeclareMathOperator*{\argmax}{arg\,max}
\newcommand{\fru}{\mathfrak{u}}
\newcommand{\frut}{\mathfrak{\tilde{u}}}
\newcommand{\eps}{\varepsilon}
\newcommand{\epso}{\varepsilon^{\mathit{old}}}
\newcommand{\epsn}{\varepsilon^{\mathit{new}}}
\newcommand{\hk}{\hat{\kappa}}
\newcommand{\frb}{\mathfrak{b}}
\newcommand{\dt}{\Delta t}
\begin{document}

\title{Over-Approximation of Fluid Models}

\author{Max~Tschaikowski
%\thanks{Manuscript received April 19, 2005; revised August 26, 2015.}}
\thanks{The author is with the Technische Universit\"{a}t Wien, Austria (e-mail: max.tschaikowski@tuwien.ac.at).}}

\maketitle

\begin{abstract}
Fluid models are a popular formalism in the quantitative modeling of biochemical systems and analytical performance models. The main idea is to approximate a large-scale Markov chain by a compact set of ordinary differential equations (ODEs). Even though it is often crucial for a fluid model under study to satisfy some given properties, a formal verification is usually challenging. This is because parameters are often not known precisely due to finite-precision measurements and stochastic noise. In this paper, we present a novel technique that allows one to efficiently compute formal bounds on the reachable set of time-varying nonlinear ODE systems that are subject to uncertainty. To this end, we a) relate the reachable set of a nonlinear fluid model to a family of inhomogeneous continuous time Markov decision processes and b) provide optimal and suboptimal solutions for the family by relying on optimal control theory. The proposed technique is efficient and can be expected to provide tight bounds. We demonstrate its potential by comparing it with a state-of-the-art over-approximation approach.
\end{abstract}

\begin{IEEEkeywords}
Nonlinear systems, Uncertain systems, Markov processes, Optimal control
\end{IEEEkeywords}

\section{Introduction}\label{sec_intro}

In the last decades, fluid (or mean-field) models underlying biochemical and computer systems have gained a lot of momentum. Possible examples are chemical reaction networks~\cite{ccdtt-sci-reports-2012}, optical switches~\cite{DBLP:conf/sigmetrics/HoudtB12} and layered queueing networks~\cite{DBLP:journals/tse/Tribastone13}. The main idea is to approximate the original stochastic model which is usually given in terms of a large-scale continuous time Markov chain (CTMC) by means of a compact system of ordinary differential equations (ODEs). When the number of agents (molecules, jobs, nodes etc.) present in the system tends to infinity, the simulation runs of a suitably scaled version of the CTMC can be shown to converge in probability to the deterministic solution of the underlying fluid model~\cite{kurtz-chem,DBLP:journals/pe/BortolussiHLM13}. The law of mass action from chemistry~\cite{cttvPNAS}, for instance, has been shown to be the fluid model of a CTMC semantics stated on the molecule level~\cite{kurtz-chem}.

Unfortunately, a precise parameterization of a fluid model is often not possible due to finite-precision measurements or stochastic noise~\cite{dsn16BortolussiGast,DBLP:journals/tac/TschaikowskiT16}. Hence, in order to verify that a nonlinear fluid model satisfies some given property in the presence of parameter functions that are subject to uncertainty, it becomes necessary to estimate the reachable set. This is because a finite set of possible ODE solutions (i.e., a proper subset of the reachable set) can only be used establish the presence of property violations but does not suffice to exclude their existence in general~\cite{journals/automatica/Prajna06,DBLP:journals/automatica/AbatePLS08}. Another reason is that closed-form expressions for reachable sets of nonlinear ODE systems are not known in general~\cite{Lafferriere1999}.

The estimation of reachable sets of continuous ODE systems is of crucial importance in the field of control engineering and has received a lot of attention over the decades. Linear ODE systems with uncertainties (alternatively, disturbances) are well-understood because in this case the reachable set can be shown to be convex. The situation where also matrix coefficients are uncertain~\cite{Althoff2011a}, however, is more challenging than the standard control theoretical setting of additive uncertainties~\cite{Kurzhanski2000,Girard2006,B-GirLeG08a,Bak:2017:HTC:3049797.3049808}. Bounding the reachable set of nonlinear ODE systems is more difficult and there is a number of different techniques which complement each other. The abstraction approach approximates a nonlinear ODE system locally by an affine mapping~\cite{Asarin2003,Donze2007,Althoff2013a} or a multivariate polynomial~\cite{Berz1998,DBLP:conf/cav/ChenAS13}. The error can then be estimated using Taylor approximation and interval arithmetic~\cite{Althoff2013a,Scott201393}. While abstraction techniques can cover many practical models, in general it is computationally prohibitive to obtain tight over-approximations for larger nonlinear systems~\cite{Althoff2013a,DBLP:conf/cav/Duggirala016}. Lyapunov-like functions~\cite{journals/automatica/Prajna06,DBLP:journals/tac/Angeli02,6160735,B-GirPap06c,Fan2016} known from the stability theory of ODE systems provide an alternative to abstraction techniques. Despite the fact that they often lead to tight bounds, their automatic computation is only possible in special cases~\cite{B-GirPap06c}. In~\cite{bayen2002guaranteed,DBLP:journals/automatica/Lygeros04,mitchell2005time} it has been observed that over-approximation can be encoded as an optimal control problem. While theoretically appealing, the approach relies on the Hamilton-Jacobi equation, a partial differential equation which can only be solved for dynamical systems with few variables~\cite{Liberzon}. In a similar vein of research,~\cite{dsn16BortolussiGast,doi:10.1137/0325010} used the necessary optimality conditions of Pontryagin's principle~\cite{Liberzon} to derive heuristic estimations on reachable sets of nonlinear ODE systems. %Despite the fact that the SAT problem is NP-complete, the availability of efficient satisfiability modulo theories (SMT) solvers motivated to encode over-approximation of dynamical systems as SMT problems~\cite{DBLP:conf/tacas/KongGCC15}.

%Monotonic systems~\cite{Ramdani2008,RAMDANI2010263}, differential inequalities~\cite{DBLP:journals/tac/TschaikowskiT16} and interval arithmetic~\cite{Scott201393}, instead, allow for efficiently computable bounds. However, the quality of the approximation depends on the model and may be loose in some cases~\cite{dsn16BortolussiGast}.

\emph{Contributions.} In the present paper, we introduce an over-approximation technique for the reachable sets of fluid models. The main idea is to exploit the fact that nonlinear fluid models can be related to the linear Kolmogorov equations of CTMCs. More specifically, the technical novelty of the present work is to prove that $i)$ a nonlinear fluid model can be over-approximated by solving a family \emph{inhomogeneous} continuous time Markov decision processes (ICTMDPs)~\cite{hernandez2009} with continuous action spaces and; $ii)$ to show that the family of ICTMDPs can be solved efficiently by modifying the strict version of Pontryagin's principle~\cite{KAMIEN} which is sufficient for optimality. This allows one to estimate the reachable set of an, in general, \emph{nonlinear} ODE system by studying the reachable sets of a family of \emph{linear} ODE systems.

%Thanks to this, it is possible to estimate the reachable set of the original nonlinear fluid model by solving a family of \emph{inhomogeneous} continuous time Markov decision processes (ICTMDPs)~\cite{hernandez2009} with continuous action spaces. This allows one to estimate the reachable set of an, in general, \emph{nonlinear} ODE system by studying the reachable sets of a family of \emph{linear} ODE systems. To ensure efficient computation and tight bounds, we provide a solution to a class of ICTMDPs by modifying the strict version of Pontryagin's principle~\cite{KAMIEN} which is sufficient for optimality.

For nonlinear fluid models, the proposed approach a) is efficient; b) induces bounds that can be expected to be tight and; c) allows for an algorithmic treatment in the case where the ODE system is given by multivariate polynomials, thus covering in particular biochemical models. A comparison with the state-of-the-art tool for reachability analysis CORA~\cite{Althoff2015a} in the context of the well-known SIRS model from epidemiology~\cite{dsn16BortolussiGast} confirms the potential of the proposed technique.

%\emph{Comparison to existing approaches.} The proposed approach is complementary to existing over-approximation techniques. Indeed, while our approach is less efficient than those based on differential inequalities~\cite{Ramdani2008,RAMDANI2010263}, it can be expected to provide tighter bounds because it is relies on a strict version of Pontryagin's principle. Approaches based on abstraction~\cite{DBLP:conf/cav/ChenAS13,Althoff2015a}, optimal control theory~\cite{DBLP:journals/automatica/Lygeros04,mitchell2005time} and SMT solvers~\cite{DBLP:conf/tacas/KongGCC15}, instead, may suffer from the curse of dimensionality, while our technique does not. However, it is fair to say that abstraction approaches can be expected to outperform the present technique for ODE models of small and moderate size. Also, the proposed technique considers \emph{time-varying uncertain parameters} and \emph{fixed} initial conditions, while most approaches allow the initial conditions to be uncertain as well. We argue however that uncertainty in parameters is often sufficient in practice. In the case of biochemistry, for instance, initial concentrations can often be measured, while reaction rate coefficients are usually difficult to obtain and may vary with time~\cite{VANLIER2013305}.

\emph{Related work on CTMDPs.} With efficient solution techniques dating back to the sixties, CTMDPs~\cite{hernandez2009} belong to one of the best studied classes of optimization problems. While there exists a large body of literature on homogeneous CTMDPs, however, much less is known about the inhomogeneous case. Moreover, most works on CTMDPs interpret controls as policies, meaning that only a subclass of uncertainties is admissible~\cite[Section 8.3]{rieder2011}. Additionally, the cost function of interest is often either the discounted or the average cost~\cite{hernandez2009} which cannot be used for the over-approximation of Kolmogorov equations. To the best of our knowledge, the only work which has studied the case of inhomogeneous \text{CTMDPs} featuring continuous action spaces and time dependent policies with respect to a cost which can be used for over-approximation is~\cite{doi:10.1287/educ.1100.0077}. In this work, three concrete queueing systems were analyzed using Pontryagin's principle~\cite{Liberzon}. For each of the three models, the underlying necessary conditions were shown to be already sufficient for optimality.

\emph{Paper outline.} Section~\ref{sec_nut} provides a high-level discussion of our approach using a concrete example. Section~\ref{sec_an} continues by introducing agent networks, a rich class of ODE systems that can be covered by our technique. In Section~\ref{sec_ra} we first relate the reachable set of the original nonlinear ODE system underlying an agent network to the solution of a family of ICTMDPs. Afterwards, we present in Sections~\ref{sec_mdps}\---\ref{sec_impl} an efficient solution approach to \mbox{ICTMDP}s. Section~\ref{sec_case_studies} compares a prototype implementation of the approach with CORA, while Section~\ref{sec_disc} discusses how the approach complements existing approximation techniques. Section~\ref{sec_conclusion} concludes the paper. % After outlining the strengths and weaknesses of the presented approach in Section~\ref{sec_discussion}

\emph{Notation.} For nonempty sets $A$ and $\calI$, let $A^\calI$ denote the set of all functions from $\calI$ to $A$. Note that elements of $A^\calI$ can be interpreted as vectors with values in $A$ and coordinates in $\calI$. We write $x \leq x'$ for $x,x' \in \RE^\calI$ whenever $x_i \leq x'_i$ for all $i \in \calI$. The equality of two functions $f$ and $g$, instead, is denoted by $f \equiv g$. By $\cals$ we refer to the finite set of (agent) states; elements $V \in \RE^\cals$ of the reachable set of an ODE system are called concentrations instead. The derivative with respect to time of a function $V \in [0;T] \to \RE^\cals$ is denoted by $\dot{V}$. Instead, $\mathds{1}$ denotes the characteristic function.

\section{The Main Idea in a Nutshell}\label{sec_nut}

We first discuss the problem and the proposed solution on the example of the SIRS model from epidemiology~\cite{dsn13IacobelliTribastone} that is given by the nonlinear ODE system
\begin{align*}
\dot{V}^{\kappa_\beta}_S & = - V^{\kappa_\beta}_S V^{\kappa_\beta}_I + V^{\kappa_\beta}_R \\
\dot{V}^{\kappa_\beta}_I & = - \kappa_\beta V^{\kappa_\beta}_I + V^{\kappa_\beta}_S V^{\kappa_\beta}_I \nonumber \\
\dot{V}^{\kappa_\beta}_R & = - V^{\kappa_\beta}_R + \kappa_\beta V^{\kappa_\beta}_I , \nonumber
\end{align*}
where $V^{\kappa_\beta}_S, V^{\kappa_\beta}_I$ and $V^{\kappa_\beta}_R$ refers to the concentration of susceptible, infected and recovered agents, respectively, and $\kappa_\beta$ denotes the positive \emph{time-varying} recovery rate parameter. We are interested in the case where the parameter function $\kappa_\beta$ is \emph{uncertain}. More specifically, we assume that $\kappa_\beta \equiv \hk_\beta + u_\beta$, where $\hk_\beta$ is a known function resembling the \emph{nominal} (or average) recovery parameter function, while $u_\beta$ is an unknown \emph{uncertainty} which satisfies $|u_\beta(\cdot)| \leq \delta_\beta(\cdot)$ for some known function $\delta_\beta$. With this, the above ODE system rewrites to
\begin{align}\label{eq_nut1}
\dot{V}^{u_\beta}_S & = - V^{u_\beta}_S V^{u_\beta}_I + V^{u_\beta}_R \\
\dot{V}^{u_\beta}_I & = - (\hk_\beta + u_\beta) V^{u_\beta}_I + V^{u_\beta}_S V^{u_\beta}_I \nonumber \\
\dot{V}^{u_\beta}_R & = - V^{u_\beta}_R + (\hk_\beta + u_\beta) V^{u_\beta}_I \nonumber
\end{align}
The \emph{nominal} solution $V^0$ corresponds to the case where $\kappa_\beta \equiv \hk_\beta$, i.e., when $u_\beta \equiv 0$. In practice, \emph{nominal} parameter functions arise from finite-precision measurements, average behavior etc., while \emph{uncertainties} account for the precision of measurements, conservative parameter estimations and stochastic noise.

\emph{\textbf{Problem to solve.}} For a given time horizon $T > 0$, we seek to provide, for each $0 \leq t \leq T$, a superset which contains the reachable set $\calR(t) = \{V^{u_\beta}(t) \mid |u_\beta(\cdot)| \leq \delta_\beta(\cdot) \}$. To this end, we bound the maximal deviation of $V^{u_\beta}$ from the nominal trajectory $V^0$, i.e., for each $B \in \{S,I,R\}$, we formally estimate the function
\begin{align}\label{eq_eps_t_0}
\calE_B(t) = \sup \{ |V_B^{u_\beta}(t) - V_B^0(t)| \ \mid \ |u_\beta(\cdot)| \leq \delta_\beta(\cdot) \}
\end{align}
Since $V_B^{u_\beta}(t) = V_B^0(t) + V_B^{u_\beta}(t) - V_B^0(t)$, we infer
\[
V_B^0(t) - \calE_B(t) \leq V_B^{u_\beta}(t) \leq V_B^0(t) + \calE_B(t)
\]
for all $0 \leq t \leq T$ and $B \in \{S,I,R\}$. With this, it holds that
\[
\calR(t) \subseteq \prod_{B \in \{S,I,R\}} \big[ V_B^0(t) - \mathcal{E}_B(t); V_B^0(t) + \mathcal{E}_B(t) \big] ,
\]
i.e., $\calR(t)$ can be estimated by bounding the positive function $\mathcal{E} = (\mathcal{E}_S(\cdot),\mathcal{E}_I(\cdot),\mathcal{E}_R(\cdot))$. In what follows, we present a technique addressing this task.

\emph{\textbf{First Step: Decoupling.}} Since the formal estimation of nonlinear dynamical systems is difficult, we relate the solution of~(\ref{eq_nut1}) to that of a special linear ODE system. More specifically, we relate~(\ref{eq_nut1}) to the linear Kolmogorov equations of a suitable CTMC. To this end, we first note that~(\ref{eq_nut1}) is induced by the law of mass action~\cite{DBLP:conf/concur/CardelliTTV15} and the chemical reactions
\begin{align}\label{ex_sir_2}
S + I & \act{1} I + I, & I & \act{\hk_\beta + u_\beta} R, &  R & \act{1} S
\end{align}
The first reaction of~(\ref{ex_sir_2}) states that an infected agent can infect a susceptible one, while the second reaction implies that an infected agent eventually recovers. Instead, the third reaction expresses the fact that a recovered agent eventually loses its immunity and becomes susceptible again.

Apart from inducing the ODE system~(\ref{eq_nut1}), the chemical reactions~(\ref{ex_sir_2}) induce also a probabilistic model. Intuitively, given a large group of agents interacting according to~(\ref{ex_sir_2}), the stochastic behavior of a single agent in the group is given in terms of a CTMC with the states $S,I,R$ such that at time $t$ the transition rate
\begin{align}\label{eq_nut_trans}
& \quad \bullet \text{ from state $S$ into state $I$ is $V^{u_\beta}_I(t)$;} \\
& \quad \bullet \text{ from state $I$ into state $R$ is $\hk_\beta(t) + u_\beta(t)$;} \nonumber \\
& \quad \bullet \text{ from state $R$ into state $S$ is $1$.} \nonumber
\end{align}
The transition rate from state $S$ into state $I$ accounts for the fact that the probability of being infected is directly proportional to the concentration of infected agents.

The transition rates provided above imply that the transient probabilities of the CTMC satisfy the Kolmogorov equations
\begin{align}\label{eq_sir_atomic_u}
\dot{\pi}^{u_\beta}_S & = - V^{u_\beta}_I \pi^{u_\beta}_S + \pi^{u_\beta}_R \\
\dot{\pi}^{u_\beta}_I & = - (\hk_\beta + u_\beta) \pi^{u_\beta}_I + V^{u_\beta}_I \pi^{u_\beta}_S \nonumber \\
\dot{\pi}^{u_\beta}_R & = - \pi^{u_\beta}_R + (\hk_\beta + u_\beta) \pi^{u_\beta}_I , \nonumber
\end{align}
where $\pi^{u_\beta}_S(t)$, $\pi^{u_\beta}_I(t)$ and $\pi^{u_\beta}_R(t)$ denotes the probability that the fixed agent is susceptible, infected and recovered at time $t$, respectively.

We now make the pivotal observation that the solution $\pi^{u_\beta}$ of~(\ref{eq_sir_atomic_u}) with the initial condition given by $\pi^{u_\beta}(0) = V^{u_\beta}(0)$ is also a solution of~(\ref{eq_nut1}), i.e., $\pi^{u_\beta} \equiv V^{u_\beta}$. (To see this, replace each $\pi^{u_\beta}_A$ with $V_A^{u_\beta}$ in~(\ref{eq_sir_atomic_u}).) Hence, if we are given $V^{u_\beta}_I$, the nonlinear ODE system~(\ref{eq_nut1}) can be expressed in terms of the linear Kolmogorov equations~(\ref{eq_sir_atomic_u}).

Unfortunately, we cannot use~(\ref{eq_sir_atomic_u}) directly to estimate $\calE$ from~(\ref{eq_eps_t_0}) because of the term $V^{u_\beta}_I$. We tackle this problem by replacing $V^{u_\beta}_I$ by $V^0_I + u_I$, where $V^0$ is the nominal trajectory of~(\ref{eq_nut1}) in the case of $u_\beta \equiv 0$ and $u_I$ is a new uncertainty function that satisfies $|u_I(\cdot)| \leq \eps_I(\cdot)$ for some positive function $\eps_I$. This yields the linear ODE system
\begin{align}\label{ex_eq_sir_atomic_uu}
\dot{\pi}^{u_\beta,u_I}_S & = - (V^0_I + u_I) \pi^{u_\beta,u_I}_S + \pi^{u_\beta,u_I}_R \\
\dot{\pi}^{u_\beta,u_I}_I & = - (\hk_\beta + u_\beta) \pi^{u_\beta,u_I}_I + (V^0_I + u_I) \pi^{u_\beta,u_I}_S \nonumber \\
\dot{\pi}^{u_\beta,u_I}_R & = - \pi^{u_\beta,u_I}_R + (\hk_\beta + u_\beta) \pi^{u_\beta,u_I}_I \nonumber
\end{align}

The key observation is that for any $u_\beta$, the uncertainty $u_I := V^{u_\beta}_I - V^0_I$ induces a solution of~(\ref{ex_eq_sir_atomic_uu}) which coincides with the solution of~(\ref{eq_sir_atomic_u}), meaning that $\pi^{u_\beta,u_I} \equiv V^{u_\beta}$ whenever $\pi^{u_\beta,u_I}(0) = V^{u_\beta}(0)$.

Moreover,~(\ref{ex_eq_sir_atomic_uu}) is a linear ODE system that is \emph{decoupled} from~(\ref{eq_nut1}). Thus, instead of considering the maximal deviation of $V^{u_\beta}$ from $V^0$, $\calE$, the above discussion motivates to focus on the maximal deviation of $\pi^{u_\beta,u_I}$ from $\pi^{0,0}$, i.e.,
\begin{multline*}
(\Phi_B(\eps))(t) = \sup \{ |\pi_B^{u_\beta,u_I}(t) - \pi_B^{0,0}(t)| \\
|u_\beta(\cdot)| \leq \delta_\beta(\cdot) \text{ and } |u_I(\cdot)| \leq \eps_I(\cdot) \} ,
\end{multline*}
where $\eps = (\eps_S(\cdot),\eps_I(\cdot),\eps_R(\cdot))$ is a positive function, $B \in \{S,I,R\}$ and $\pi^{0,0}$ denotes the nominal solution of~(\ref{ex_eq_sir_atomic_uu}) when $u_\beta \equiv 0$ and $u_I \equiv 0$.

%Intuitively, the function $\Phi$ takes as input a function $\eps$ and maps it to a new function $\eps' = (\eps'_S(\cdot),\eps'_I(\cdot),\eps'_R(\cdot))$. Our goal is to find a function $\eps$ that satisfies $\calE_B(t) \leq \eps_B(t)$ for all $B \in \{S,I,R\}$ and $0 \leq t \leq T$ (or $\calE \leq \eps$ for short). Since the input $\eps_I$ is, essentially, the guess for the maximal deviation of $\pi_I^{u_\beta,u_I}$ from $\pi_I^{0,0}$, a natural candidate is any function $\eps$ satisfying $\Phi(\eps) \leq \eps$. This is because $\Phi(\eps) \leq \eps$ implies that the maximal deviation of $\pi_I^{u_\beta,u_I}$ from $\pi_I^{0,0}$ is bounded by $\eps_I$ whenever $u_I$ itself is bounded by $\eps_I$. Building on this intuition, we will prove that $\Phi(\eps) \leq \eps$ indeed implies $\calE(T) \leq \eps$ and provide an algorithm that computes, whenever possible, the smallest such function $\eps$.

Intuitively, $\Phi$ takes a guess $\eps = (\eps_S,\eps_I,\eps_R)$ for $\calE$ as input and provides the new guess $\Phi(\eps) = (\Phi_S(\eps),\Phi_I(\eps),\Phi_R(\eps))$ for $\calE$. Our goal is to find an $\eps$ that satisfies $(\Phi_B(\eps))(t) \leq \eps_B(t)$ for all $B \in \{S,I,R\}$ and $0 \leq t \leq T$ (or $\Phi(\eps) \leq \eps$ for short). This is because $\eps_I$ is a guess for a bound on $|V_I^{u_\beta} - V_I^{0}|$, while $\Phi(\eps) \leq \eps$ implies that $|\pi_I^{u_\beta,u_I} - \pi_I^{0,0}|$ is bounded by $\eps_I$ whenever $|u_I|$ itself is bounded by $\eps_I$. Building on this intuition, we will prove that $\Phi(\eps) \leq \eps$ implies $\calE \leq \eps$ and provide an algorithm that computes, whenever possible, the smallest such positive function $\eps$.

\emph{\textbf{Second Step: Approximation of Kolmogorov equations.}} The above discussion shows that an estimation of $\calE$ requires one to evaluate the function $\Phi$. Thanks to the fact that~(\ref{ex_eq_sir_atomic_uu}) arises from~(\ref{eq_sir_atomic_u}) by decoupling, it can be seen that~(\ref{ex_eq_sir_atomic_uu}) describes the Kolmogorov equations of a CTMC with time-varying uncertain transition rates that are not coupled to the ODE system~(\ref{eq_nut1}). This, in turn, allows one to compute any value of $\Phi$ by solving a family of tractable optimization problems. More specifically, the value $(\Phi_B(\eps))(\tha)$ can be computed by determining two uncertainty functions $u^\ast_\beta$ and $u^\ast_I$ such that
\begin{multline}\label{eq_kolmogorov}
\pi_B^{u^\ast_\beta,u^\ast_I}(\tha) = \texttt{opt}\{ \pi_B^{u_\beta,u_I}(\tha) \mid \pi^{u_\beta,u_I} \text{ solves~(\ref{ex_eq_sir_atomic_uu}) and} \\
|u_\beta(\cdot)| \leq \delta_\beta(\cdot), |u_I(\cdot)| \leq \eps_I(\cdot) \} ,
\end{multline}
with $\pi^{u_\beta,u_I}(0) = V^{u_\beta}(0)$ and $\texttt{opt} \in \{\inf, \sup\}$. By interpreting the uncertainty functions $u^\ast_\beta$ and $u^\ast_I$ as optimal controls,~(\ref{eq_kolmogorov}) defines an optimal control problem with cost $\pi_B^{u_\beta,u_I}(\tha)$.

A major result of the paper shows that uncertainty functions $u^\ast_\beta$ and $u^\ast_I$ can be efficiently computed by relying on a strict version of Pontryagin's principle which is sufficient for optimality. Apart from solving~(\ref{eq_kolmogorov}) exactly, this allows one to devise an efficient procedure for the formal estimation of $\calE$ whose bounds can be expected to be tight.

\section{Technical Preliminaries}\label{sec_an}

In this section, we introduce agent networks (ANs), a class of nonlinear ODE systems to which our over-approximation technique can be applied. ANs are, essentially, chemical reactions networks whose reaction rate functions are not restricted to the law of mass action. The distinctive feature of ANs is that their dynamics can be related to the linear Kolmogorov equations of CTMCs.

\begin{definition}\label{def_agent_network}
An agent network (AN) is a triple $(\cals,\calK,\calF)$ of a finite set of states $\cals = \{A_1,\ldots,A_{|\cals|}\}$, parameters $\calK$ and reaction rate functions $\calF$. Each reaction rate function $\Theta_j : \RE_{ \geq 0}^{\cals \cup \calK} \rightarrow \REz$
\begin{itemize}
    \item describes the rate at which reaction $j$ occurs;
    \item takes concentration and parameter vectors $V \in \RE_{>0}^\cals$ and $\kappa \in \RE_{>0}^\calK$, respectively;
    \item is accompanied by a multiset $R_j$ of atomic transitions of the form $A_l \to A_{l'}$, where $A_l \to A_{l'}$ describes an agent in state $A_l$ interacting and changing state to $A_{l'}$.
\end{itemize}
\end{definition}

From a multiset $R_j$, we can extract two integer valued $|\cals|$-vectors $d_j$ and $c_j$, counting how many agents in each state are transformed during a reaction (respectively produced and consumed). Specifically, for each $1 \leq j \leq |\calF|$, let $c_{jl}, d_{jl} \in \mathbb{N}_0$ be such that
\[
c_{j,l} = \sum_{A_l \to A_{l'} \in R_j} 1 \quad \text{and} \quad d_{j,l'}  = \sum_{A_l \to A_{l'} \in R_j} 1.
\]
With these vectors, we can express the $j$-th reaction in the chemical reaction style~\cite{kurtz-chem} as follows:
\begin{align}\label{eq_gen_reaction}
c_{j,1} A_1 + \ldots + c_{j,|\cals|} A_{|\cals|} \act{\Theta_j}  d_{j,1} A_1 + \ldots + d_{j,|\cals|} A_{|\cals|}
\end{align}

We next introduce the ODE semantics of an AN.

\begin{definition}\label{def_global_ode_with_u}
For a given AN $(\cals,\calK,\calF)$, a continuous parameter function $\hk : [0;T] \to \RE_{>0}^\calK$ and a piecewise continuous function $\delta : [0;T] \to \RE^\calK_{>0}$ with $\delta_\alpha(\cdot) <  \hk_\alpha(\cdot)$ with $\alpha \in \calK$, let
\begin{multline*}
\calU_{\calK}^\delta := \{ u : [0;T] \to \RE_{>0}^\calK \mid |u_\alpha(\cdot)| \leq \delta_\alpha(\cdot) \\
\text{and } u \text{ is measurable} \}
\end{multline*}
denote the set of admissible uncertainties. Then, the reachable set of $(\cals,\calK,\calF)$ with respect to $\calU_\calK^\delta$ is given by the solution set $\{V^u \mid u \in \calU_{\calK}^\delta\}$, where $V^u$ solves
\begin{align}\label{eq_global_ode}
\dot{V}^u_B(t) & = F_B(V^u(t),\hk(t) + u(t))  \\
& := \sum_{1 \leq j \leq |\calF|} (d_{j,B} - c_{j,B}) \Theta_j\big(V^u(t), \hk(t) + u(t) \big) \nonumber
\end{align}
for all $B \in \cals$. The reachable set at time $0 \leq t \leq T$ is given by $\calR(t) = \{V^u(t) \mid u \in \calU_{\calK}^\delta\}$.
\end{definition}

The following example demonstrates Definition~\ref{def_agent_network} and~\ref{def_global_ode_with_u} in the context of the SIRS model from Section~\ref{sec_nut}. In particular, we remark the following.

\begin{remark}
Throughout the paper, the SIRS model from Section~\ref{sec_nut} is used to explain definitions and statements. All example environments refer to it.
\end{remark}

\begin{example}\label{ex_sir}
Consider the agent network $(\{S,I,R\},$ $\{\beta\},$ $\{\Theta_1,\Theta_2,\Theta_3\})$ given by
\begin{align*}
& R_1 \! = \! \{ S \to I, I \to I \},   & &   R_2 \! = \! \{ I \to R \},                     & & R_3 \! = \! \{ R \to S \}, \\
& \Theta_1(V,\kappa) \! = \! V_S V_I,   & &   \Theta_2(V,\kappa) \! = \! \kappa_\beta V_I,   & & \Theta_3(V,\kappa) \! = \! V_R ,
\end{align*}
where $V = (V_S,V_I,V_R)$ and $\kappa = (\kappa_\beta)$. Let the time-varying uncertain recovery rate parameter be given by $\kappa_\beta \equiv \hk_\beta + u_\beta$, where $\hk_\beta$ denotes the nominal trajectory and $u = (u_\beta) \in \calU^\delta_{\{\beta\}}$ is the uncertainty function for some positive $\delta = (\delta_\beta)$ such that $\delta_\beta < \hk_\beta$. The AN induces the reactions
\begin{align}\label{ex_sir_20}
S + I & \act{V_S V_I} I + I, & I & \act{(\hk_\beta + u_\beta) V_I} R, &  R & \act{V_R} S ,
\end{align}
while the ODE system~(\ref{eq_global_ode}) is given by~(\ref{eq_nut1}).
\end{example}

In the following, we assume that an AN $(\cals,\calK,\calF)$ is accompanied by a finite time horizon $T > 0$, a positive initial condition $V(0) \in \RE^\cals_{>0}$ and a Lipschitz continuous parameter function $\hk \in [0;T] \to \RE_{>0}^\calK$. Moreover, we require that each function $\Theta_j$
\begin{enumerate}[$i)$]
    \item is analytic in $(V,\kappa)$ and linear in $\kappa$, i.e., it holds that $\Theta_j(V, c \kappa + c' \kappa') = c \Theta_j(V, \kappa) + c' \Theta_j(V, \kappa')$;
    \item satisfies $\Theta_j(V) = 0$ whenever $V_{A_l} = 0$ and $c_{j,l} > 0$, where $c_{j,l}$ is as in~(\ref{eq_gen_reaction}).
\end{enumerate}

%The constraint $\delta < \min_{\alpha \in \calK} \min_{0 \leq t \leq T} \hk_\alpha(t)$ from Definition~\ref{eq_global_ode_with_u} ensure that the ODE solution~(\ref{eq_global_ode}) is positive.

Condition $i)$ enforces the existence of a unique solution~(\ref{eq_global_ode}) and allows us to apply Pontryagin's principle in Section~\ref{sec_mdps}, while condition $ii)$ says essentially that the $j$-th reaction~(\ref{eq_gen_reaction}) can only take place when all its reactants have a positive concentration. % and ensures that there is an $\eta > 0$ such that $V^u(t) \in \RE^\cals_{\geq \eta}$ for all $u \in \calU^\delta_\calK$ and $0 \leq t \leq T$.

We wish to point out that $i)$ and $ii)$ can be easily checked because analytic functions are closed under summation, multiplication and composition. Additionally, functions $\Theta_j$ often enjoy a simple form in practical models (in the case of biochemistry, for instance, they are given in terms of monomials).

With $i)$ and $ii)$ in place, the following can be proven.

\begin{proposition}\label{prop_existence}
In the case $i)$ and $ii)$ hold true,~(\ref{eq_global_ode}) admits a unique solution $V^u$ on $[0;T]$ for any uncertainty function $u \in \calU^\delta_\calK$. Moreover, there exists an $\eta > 0$ such that $V^u(t) \in \RE^\cals_{\geq \eta}$ for all $u \in \calU^\delta_\calK$ and $0 \leq t \leq T$.
\end{proposition}

\begin{proof}
Local existence and uniqueness are ensured by~\cite[Section 3.3.1]{Liberzon}. Let us define $W(0) := V(0)$,
\begin{align*}%\label{eq_global_ode_lower}
\dot{W}_B(t) & \! = \! G_B(t,W(t)) \! := - \! \sum_{1 \leq j \leq |\calF|} \! c_{j,B}\Theta_j\big(W(t), \hk(t) \! + \! \delta(t) \big)
\end{align*}
for all $B \in \cals$ and let $e(\alpha) \in \RE^\calK$ denote the vector with $e(\alpha)_{\alpha'} = 1$ if $\alpha = \alpha'$ and $e(\alpha)_{\alpha'} = 0$ when $\alpha \neq \alpha'$. With this, $ii)$ implies for all $1 \leq j \leq |\calF|$ and $u \in \calU^\delta_\calK$ that
%\begin{align*}%\label{eq_kappa_linearity}
%\Theta_j(V, \kappa) & = \Theta_j(V, (\kappa_\alpha)_{\alpha \in \calK}) = \sum_{\alpha \in \calK} \kappa_\alpha \Theta_j(V, e(\alpha) )
%\end{align*}
%This, in turn, yields for any $u \in \calU^\delta_\calK$
\begin{align*}
\Theta_j\big(W, \hk + u \big) & \leq \Theta_j(W, \hk) + \sum_{\alpha \in \calK} \delta_\alpha \Theta_j(W, e(\alpha))
\end{align*}
because the function $\Theta_j$ is nonnegative. Hence, $\dot{V}^u(t) = F(V^u(t),\hk(t) + u(t)) \geq G(t,V^u(t))$, thus implying that $V^u \geq W$ for all $u \in \calU^\delta_\calK$. We next show that $W$ is positive on $[0;T]$. To this end, let us assume towards a contradiction that there is $0 < \tau \leq T$ such that $W_A(\tau) = 0$ for some $A \in \cals$. Thanks to the continuity of $W$, we may assume without loss of generality that $W$ is positive on $[0;\tau)$. With $\calW(s) := W(\tau - s)$, it holds that $\dot{\calW}(s) = - G(\tau-s, \calW(s))$. There exists a sufficiently small interval $[0;\tau')$ on which  Euler's sequence given by $(\calW^{l})_{l\geq0}$, where $\calW^{0} := \calW(0)$ and $\calW^{l+1} := \calW^{l} - \dt \cdot G(\tau - l \dt, \calW^{l}) $, converges to a local solution of $\calW$~\cite{Gear:1971}. By construction, the sequence has to converge to a positive function on $(0;\tau')$ as $\dt \to 0$. However, thanks to $ii)$, $\calW_A^{0} = 0$ implies $\calW_A^{k} = 0$ for all $k \geq 0$ regardless how small $\dt > 0$ is, thus yielding a contradiction. Moreover, since $W > 0$ and $\sum_{B \in \cals} F_B(V^u(t),\hk(t) + u(t)) = 0$ for all $t \geq 0$, we also infer the existence of $V^u$ on the whole $[0;T]$.
\end{proof}

%Condition $ii)$ can be dropped whenever it can be proven that there is an $\eta > 0$ such that $V^u(t) \in \RE^\cals_{\geq \eta}$ for all $u \in \calU^\delta_\calK$ and $0 \leq t \leq T$.

It can be seen that atomic transitions enforce conservation of mass, i.e., the creation and destruction of agents is ruled out at the first sight. This problem, however, can be alleviated by the introduction of artificial agent states, see~\cite{DBLP:journals/iandc/BortolussiH15}.

\emph{Kolmogorov Equations of Agent Networks.} Thanks to the fact that the dynamics of an AN arise from atomic transitions, it is possible to define a CTMC underlying a given AN which Kolmogorov equations are closely connected to the ODE system~(\ref{eq_global_ode}).

\begin{definition}\label{def_transition_rate}
For a given AN $(\cals,\calK,\calF)$, define
\[
r_{B,C}(V,\kappa) = \sum_{1 \leq j \leq |\calF| \ \mid \ B \rightarrow C \, \in \, R_j} \Theta_j(V,\kappa) / V_B
\]
for all $B,C \in \cals$ with $B \neq C$, $V \in \RE^\cals_{>0}$ and $\kappa \in \RE_{>0}^\calK$. Then, the coupled CTMC $(X^u(t))_{t\geq0}$ underlying $(\cals,\calK,\calF)$ and $u \in \calU_{\calK}^\delta$ has state space $\cals$ and its transition rate from state $B$ into state $C$ at time $t$ is $r_{B,C}(V^u(t),\hk(t) + u(t))$. The coupled Kolmogorov equations of $(X^u(t))_{t\geq0}$ are
\begin{align}\label{eq_atomic_ode_with_u}
\dot{\pi}^u_B(t) & = f_B\big(\pi^u(t),V^u(t),\hk(t) + u(t)\big) \\
& := - \sum_{C : C \neq B} r_{B,C}(V^u(t),\hk(t) + u(t)) \pi^u_B(t) \nonumber \\
& \qquad + \sum_{C : C \neq B} r_{C,B}(V^u(t),\hk(t) + u(t)) \pi^u_C(t) \nonumber
\end{align}
\end{definition}
In the context of the SIRS example, Definition~\ref{def_transition_rate} gives rise to the transition rates~(\ref{eq_nut_trans}), the uncertainty function $u = (u_\beta)$ and the Kolmogorov equations~(\ref{eq_sir_atomic_u}). This is because the atomic transitions $S \to I$, $I \to R$ and $R \to S$ appear only in $R_1$, $R_2$ and $R_3$ of Example~\ref{ex_sir}, respectively, thus yielding
\begin{align*}
r_{S,I}(V^u(t),\hk(t) + u(t)) & = \Theta_1(V^u(t),\hk(t) + u(t)) / V^u_S(t) \\ %= V^u_I \\
r_{I,R}(V^u(t),\hk(t) + u(t)) & = \Theta_2(V^u(t),\hk(t) + u(t)) / V^u_I(t) \\ %= \hk_\beta + u_\beta \\
r_{R,S}(V^u(t),\hk(t) + u(t)) & = \Theta_3(V^u(t),\hk(t) + u(t)) / V^u_R(t) , %= 1
\end{align*}
where $\Theta_1, \Theta_2$ and $\Theta_3$ are as in Example~\ref{ex_sir}.

The next pivotal observation establishes a relation between the ODE system~(\ref{eq_global_ode}) and the Kolmogorov equations~(\ref{eq_atomic_ode_with_u}).

\begin{proposition}\label{prop_coinc_u}
For any uncertainty $u \in \calU_\calK^\delta$ and
\begin{align}\label{eq_init_pi}
\pi^u(0) = V(0) ,
\end{align}
the solution of~(\ref{eq_atomic_ode_with_u}) exists on $[0;T]$ and satisfies $\pi^u(t) = V^u(t)$ for all $0 \leq t \leq T$.
\end{proposition}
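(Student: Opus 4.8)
The plan is to produce an explicit solution of the atomic initial value problem and then invoke uniqueness. Set $M := \sum_{C\in\cals} V^u_C(0)$ and consider the candidate $\tilde\pi := V^u/M$. The first and only substantive step is to rewrite the global dynamics~(\ref{eq_global_ode}) in ``Kolmogorov form''. Writing $c_{j,B} = \sum_{C\in\cals}\#\{B\to C\in R_j\}$ and $d_{j,B} = \sum_{C\in\cals}\#\{C\to B\in R_j\}$, the self-loop contributions ($C=B$) cancel in $d_{j,B}-c_{j,B}$, so
\[
\sum_{1\leq j\leq|\calF|}(d_{j,B}-c_{j,B})\Theta_j(V,\kappa) \;=\; \sum_{C\neq B}\Big(\sum_{j:\{C\to B\}\subseteq R_j}\!\!\Theta_j(V,\kappa)\Big)\;-\;\sum_{C\neq B}\Big(\sum_{j:\{B\to C\}\subseteq R_j}\!\!\Theta_j(V,\kappa)\Big).
\]
By Definition~\ref{def_transition_rate} the two inner sums are $r_{C,B}(V,\kappa)V_C$ and $r_{B,C}(V,\kappa)V_B$ respectively (here I use that $V^u$ is strictly positive, so that the division by $V_B$ in the definition of $r_{B,C}$ is legitimate; reactions whose multiset $R_j$ contains an atomic transition with multiplicity $>1$ are simply counted accordingly in the sums). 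Hence $F_B(V,\kappa)=f_B(V,V,\kappa)$ for all $V\in\RE^\cals_{>0}$, $\kappa\in\RE^\calK_{>0}$, and substituting $V=V^u(t)$, $\kappa=\hk(t)+u(t)$ into~(\ref{eq_global_ode}) gives $\dot V^u_B(t)=f_B\big(V^u(t),V^u(t),\hk(t)+u(t)\big)$ on $[0;T]$.

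Next I would exploit that $f_B$ is linear, hence homogeneous, in its first argument: $f_B(\lambda\pi,V,\kappa)=\lambda f_B(\pi,V,\kappa)$. Mass is conserved because $\sum_B(d_{j,B}-c_{j,B})=0$ for every $j$ forces $\sum_B\dot V^u_B\equiv0$, so $\sum_B V^u_B(t)\equiv M$ (in particular $M>0$). Dividing the identity of the previous paragraph by $M$ then yields
\[
\dot{\tilde\pi}_B(t)\;=\;\tfrac1M\,f_B\big(V^u(t),V^u(t),\hk(t)+u(t)\big)\;=\;f_B\big(\tilde\pi(t),V^u(t),\hk(t)+u(t)\big),
\]
while $\tilde\pi_B(0)=V^u_B(0)/M$ is exactly the initial condition~(\ref{eq_init_pi}). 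Thus $\tilde\pi$ solves the atomic equations~(\ref{eq_atomic_ode_with_u}) on all of $[0;T]$, which already settles the existence claim.

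It remains to upgrade ``a solution'' to ``the scaled solution''. Existence of $V^u$ on $[0;T]$ follows from assumption~$i)$ (Carathéodory local existence and uniqueness, since $u$ is only measurable) together with the a priori bounds $0<V^u_B(t)\leq M$ — the upper bound from conservation of mass, the lower one from assumption~$ii)$ via Gronwall applied to $\dot V^u_B\geq -g_B(\hk+u)\,V^u_B$ with $g_B(\hk(\cdot)+u(\cdot))$ bounded on $[0;T]$ — which confine the trajectory to a compact subset of $\RE^\cals_{>0}$ and hence extend it to $[0;T]$. With $V^u$ now a fixed continuous function, equation~(\ref{eq_atomic_ode_with_u}) is \emph{linear} in $\pi$ with coefficients $t\mapsto r_{B,C}(V^u(t),\hk(t)+u(t))$ that are bounded and measurable, so it has a unique solution; therefore $\pi^u=\tilde\pi$ and $V^u_B(t)=\pi^u_B(t)\,M$ for all $B\in\cals$ and $0\leq t\leq T$. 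The only genuinely delicate point is the combinatorial matching in the first paragraph — lining up the stoichiometry $d_{j,B}-c_{j,B}$ with the per-agent rates $r_{B,C}=\Theta_j/V_B$, including the bookkeeping for self-loops and multisets; the rest is routine ODE theory.
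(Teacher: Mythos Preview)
Your proof is correct and follows essentially the same approach as the paper: the key observation that the global dynamics~(\ref{eq_global_ode}) can be rewritten in Kolmogorov form (equivalently, that~(\ref{eq_atomic_ode_with_u}) becomes~(\ref{eq_global_ode}) upon substituting $V^u$ for $\pi^u$) is exactly what the paper's one-line proof asserts, and your detailed verification of the stoichiometric matching, mass conservation, and Carath\'eodory existence/uniqueness is precisely the ``standard arguments from functional analysis and the theory of ODEs'' that the paper invokes without spelling out.
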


\begin{proof}
Note that~(\ref{eq_atomic_ode_with_u}) rewrites into~(\ref{eq_global_ode}) if $\pi^u_B$ and $\dot{\pi}^u_B$ is replaced with $V^u_B$ and $\dot{V}^u_B$ for all $B \in \cals$, respectively. With this, the claim follows from Proposition~\ref{prop_existence}. % by standard arguments from functional analysis and the theory of ODEs.
\end{proof}
In the context of the SIRS example, Proposition~\ref{prop_coinc_u} states that the solutions of~(\ref{eq_nut1}) and~(\ref{eq_sir_atomic_u}) coincide whenever $\pi^u(0) = V(0)$.

It is possible to prove that~(\ref{eq_global_ode}) and~(\ref{eq_atomic_ode_with_u}) are the fluid limits of certain CTMC sequences in the case where $\pi^u(0) = V(0) / \normo{V(0)}$ and the number of agents in the system tends to infinity, see~\cite{DBLP:journals/pe/BortolussiHLM13,DBLP:journals/iandc/BortolussiH15,Darling2008} for details. We will not elaborate on this relation further because it is not required for the understanding of our over-approximation technique.

\section{Over-Approximation Technique}\label{sec_ra}

As anticipated in Section~\ref{sec_nut} and~\ref{sec_an}, we estimate the reachable set of an AN with respect to an uncertainty set $\calU_{\calK}^\delta$, i.e., we bound $\calR(t) = \{V^u(t) \mid u \in \calU_{\calK}^\delta\}$ for each $0 \leq t \leq T$. To this end, we study the maximal deviation from the nominal trajectory $V^0$ attainable across $\calU_{\calK}^\delta$.

\begin{definition}\label{def_max_dev}
For a given AN $(\cals,\calK,\calF)$ with uncertainty set $\calU_{\calK}^\delta$, the maximal deviation at time $t$ of~(\ref{eq_global_ode}) from $V^0$ is
\begin{align}\label{eq_eps_t}
\mathcal{E}_B(t) = \sup_{u \in \calU_{\calK}^\delta} |V_B^u(t) - V_B^0(t)|
\end{align}
with $B \in \cals$ and $\calE = (\calE_B)_{B \in \cals}$. With this, it holds that
\[
\calR(t) \subseteq \prod_{B \in \cals} \big[ V_B^0(t) - \mathcal{E}_B(t); V_B^0(t) + \mathcal{E}_B(t) \big]
\]
\end{definition}

By Proposition~\ref{prop_coinc_u}, any trajectory $V^u$ of~(\ref{eq_global_ode}) coincides with the trajectory $\pi^u$ of~(\ref{eq_atomic_ode_with_u}) if $\pi^u(0) = V(0)$. Even though this allows one to relate the reachable set of a nonlinear system to that of a linear one, the transition rates of the coupled CTMC $(X^u(t))_{t\geq0}$ depend on $V^u$. We address this by decoupling the transition rates of the coupled CTMC from $V^u$.

\begin{definition}\label{def_atomic_ode_with_uu}
For $\eps < V^0$ and $\fru = (u_\calK,u_\cals) \in \calU_\calK^\delta \times \calU_\cals^\eps$, let $(\mathcal{D}^\fru(t))_{t\geq0}$ be the decoupled CTMC with transition rates $\big(r_{B,C}(V^0(t) + u_\cals(t),\hk(t) + u_\calK(t))\big)_{B,C}$ and the decoupled Kolmogorov equations
\begin{align}\label{eq_atomic_ode_with_uu}
\dot{\pi}^\fru(t) & = h\big(t,\pi^\fru(t),(u_\calK(t),u_\cals(t))\big) \\
& := f\big(\pi^\fru(t),V^0(t) + u_\cals(t),\hk(t) + u_\calK(t)\big) \nonumber ,
\end{align}
where $f$ is as in Definition~\ref{def_transition_rate} and $\calU_\cals^\eps$ is defined similarly to $\calU_\calK^\delta$ from Definition~\ref{def_global_ode_with_u}.
\end{definition}
In the context of the AN from Example~\ref{ex_sir}, the \emph{decoupled} Kolmogorov equations~(\ref{eq_atomic_ode_with_uu}) are given by~(\ref{ex_eq_sir_atomic_uu}) with $\fru \equiv (u_\calK,u_\cals) \equiv ((u_\beta),(u_I)) \in \calU_\calK^\delta \times \calU_\cals^\eps $ $ = \calU_{\{\beta\}}^\delta \times \calU_{\{S,I,R\}}^\eps$. This is because the transition rates of the \emph{decoupled} CTMC are
\begin{align}\label{eq_a1a2}
r_{S,I}(V^0(t) + u_\cals(t),\hk(t) + u_\calK(t)) & = V^0_I(t) + u_I(t) \\
r_{I,R}(V^0(t) + u_\cals(t),\hk(t) + u_\calK(t)) & = \hk_\beta(t) + u_\beta(t) \nonumber \\
r_{R,S}(V^0(t) + u_\cals(t),\hk(t) + u_\calK(t)) & = 1 \nonumber
\end{align}
A direct comparison with the transition rates of the \emph{coupled} CTMC given in~(\ref{eq_nut_trans}) reveals that the original transition rate from $S$ into $I$, $V_I^{u_\beta}(t)$, is replaced with $V_I^0(t) + u_I(t)$.

\begin{remark}
Note that $V^0$ can be efficiently computed using a numerical ODE solver and by setting $u$ in~(\ref{eq_global_ode}) to zero.
\end{remark}

The next result relates the original ODE system~(\ref{eq_global_ode}) to the decoupled Kolmogorov equations~(\ref{eq_atomic_ode_with_uu}).

\begin{proposition}\label{prop_overapprox}
Assume that $\mathcal{E} < V^0$. Then, for any $u_{\calK} \in \calU_{\calK}^\delta$, there exists some $u_\cals \in \calU_\cals^{\mathcal{E}}$ such that the solution of~(\ref{eq_atomic_ode_with_uu}) subject to the initial condition $V(0)$ satisfies $\pi^{\fru}(t) = V^{u_\calK}(t)$ for all $0 \leq t \leq T$.
\end{proposition}

\begin{proof}
For $\eps$ with $\mathcal{E} \leq \eps < V^0$, the definition of $\calE$ implies that $u_\cals := V^{u_\calK} - V^0 \in \calU_\cals^\eps$ for any $u_\calK \in \calU_\calK^\delta$. Since $\pi^{u_\calK,u_\cals}$ from~(\ref{eq_atomic_ode_with_uu}) coincides with $\pi^{u_\calK}$ from~(\ref{eq_atomic_ode_with_u}), Proposition~\ref{prop_coinc_u} yields the claim.
\end{proof}

To provide an estimation of $\mathcal{E}$ using the Kolmogorov equations~(\ref{eq_atomic_ode_with_uu}), we next define $\Phi(\eps)$ as the maximal deviation from the nominal trajectory $\pi^{0}$ that can be attained across the uncertainties $u_\calK \in \calU_\calK^\delta$ and $u_\cals \in \calU_\cals^\eps$.

\begin{definition}\label{def_phi}
For a piecewise continuous function $\eps < V^0$, let $\Phi(\eps) = (\Phi_B(\eps))_{B \in \cals}$ be given by
\begin{align*}
(\Phi_B(\eps))(t) = \sup_{u_\calK \in \calU_\calK^\delta} \sup_{u_\cals \in \calU_\cals^\eps} |\pi_B^\fru(t) - \pi_B^{0}(t)|
\end{align*}
$(\Phi_B(\eps))(t)$ denotes the maximal deviation of $\pi^\fru_B(t)$ from $\pi_B^{0}(t)$, where $\pi^0$ arises from $\pi^\fru$ in~(\ref{eq_atomic_ode_with_uu}) if $\fru = 0$.
\end{definition}

As discussed in Section~\ref{sec_nut}, the goal is to find a positive function $\eps$ such that $\Phi(\eps) \leq \eps$. This ensures that $|\pi^\fru - \pi^{0}| \leq \eps$ for any $\fru = (u_\calK,u_\cals) \in \calU_\calK^\delta \times \calU_\cals^\eps$ and implies, as stated in the next important result, that $\mathcal{E} \leq \eps$.

\begin{theorem}\label{thm_main_bound}
If $\Phi(\eps) \leq \eps$, then $\calE \leq \eps$.
\end{theorem}

\begin{remark}
For the benefit of presentation, we prove Theorem~\ref{thm_main_bound} in Section~\ref{sec_thm_main_bound} by invoking the strict version of Pontryagin's principle presented in Section~\ref{sec_mdps}.
\end{remark}

A direct consequence of Theorem~\ref{thm_main_bound} is that a fixed point $\eps^\ast$ of $\eps \mapsto \Phi(\eps)$ estimates $\mathcal{E}$ from above whenever $\eps^\ast < V^0$.

The next result describes an algorithm for the computation of the least fixed point $\eps^\ast$.

\begin{theorem}\label{thm_fp}
Fix some small $\eps^{(0)} > 0$ and set
\[
\eps^{(k + 1)} :=
\begin{cases}
\Phi(\eps^{(k)}) & , \ \eps^{(k)} < V^0 \\
\infty & , \ \text{otherwise}
\end{cases}
\]
for all $k \geq 0$. If $\lim_{k \to \infty} \eps^{(k)} = \eps$ such that $\eps \neq \infty$, then $\eps$ is the smallest fixed point of $\Phi$ which satisfies $\eps \geq \eps^{(0)}$.
\end{theorem}

\begin{proof}%[Proof of Theorem~\ref{thm_fp}]
Obviously, $\Phi$ is monotonic increasing, i.e., $\eps \leq \eps'$ implies $\Phi(\eps) \leq \Phi(\eps')$. With this, Kleene's fixed point theorem yields the claim.
\end{proof}

Note that the computation of the sequence $(\eps^{(k)})_k$ can be terminated if $\eps^{(k+1)} < V^0$ is violated because in such case no bound can be obtained.

\subsection{Optimal Solutions for inhomogeneous CTMDPs}\label{sec_mdps}

In each step of the fixed point iteration from Theorem~\ref{thm_fp}, a new value of $\Phi$ has to be computed. To this end, for any $0 \leq \tha \leq T$ and $A \in \cals$, we have to
\begin{multline}\label{eq_opt}
\text{obtain the minimal (maximal) value of $\pi_A(\tha)$ } \\
\text{ such that } \dot{\pi}(t) = h\big(t,\pi(t),(u_{\calK}(t),u_\cals(t))\big)  \\
\text{ subject to (\ref{eq_init_pi}) and $(u_{\calK},u_\cals) \in \calU_{\calK}^\delta \times \calU_\cals^\eps$}
\end{multline}
While the solution of such optimization problems is particulary challenging in the case of nonlinear dynamics, time-varying systems such as~(\ref{eq_opt}) are easier to come by. This is because~(\ref{eq_opt}) is a linear system with additive and multiplicative uncertainties. More formally,~(\ref{eq_opt}) is linear in concentrations variables if the parameter variables are fixed and linear in parameter variables when the concentration variables are fixed.

\begin{remark}\label{rem_ctmdp}
It is worth noting that~(\ref{eq_opt}) can be rewritten in the case of minimization (maximization is similar) to
\begin{multline}\label{eq_cost_func}
\min\{ \normo{V(0)} \cdot \mathbb{E}[\mathds{1}_{\mathcal{D}^\fru(\tha) = A}]  \mid  \\
\pi^\fru(0) = V(0) / \normo{V(0)} \text{ and } \fru \in \calU_{\calK}^\delta \times \calU_\cals^\eps \}
\end{multline}
This defines a CTMDP with finite state space $\cals$ and action space $\big( \prod_{\alpha \in \calK} [-\delta_\alpha(t); \delta_\alpha(t)] \big) \times \big( \prod_{A \in \cals} [-\eps_A(t); \eps_A(t)] \big)$ at time $t$. The CTMDP is inhomogeneous due to the presence of the function $V^0$ in the transition rates from Definition~\ref{def_atomic_ode_with_uu}.
\end{remark}

For the benefit of presentation, we write in that what follows $\fru \in \calU_{\calK \cup \cals}^{\frb}$ instead of $(u_{\calK},u_\cals) \in \calU_{\calK}^\delta \times \calU_\cals^\eps$, where $\frb_\alpha = \delta_\alpha$ and $\frb_A = \varepsilon_A$ for all $\alpha \in \calK$ and $A \in \cals$, respectively. Moreover, we recall that a solution of a differential inclusion $\dot{z} \in G(z)$ is any absolutely continuous function $z$ which satisfies $\dot{z} \in G(z)$ almost everywhere.

We solve~(\ref{eq_opt}) by modifying the strict version of \mbox{Pontryagin's} principle~\cite{KAMIEN} which is sufficient for optimality. Our modification of~\cite{KAMIEN} is less general than the original because it is stated for CTMCs but it makes weaker assumptions (the concavity of $\hat{H}$ is required on positive values only).

\begin{theorem}\label{thm_suff_pont}
For any $p \in \RE^\cals$, let $H(t,\pi,(u_{\calK},u_\cals),p) = \sum_{A \in \cals} p_A h_A(t,\pi,(u_{\calK},u_\cals))$ and assume that, for any $0 \leq t \leq \tha$ and $p \in \REz^\cals$, the function
\begin{multline*}
\pi \mapsto \hat{H}(t,\pi,p) = \max\big\{ H(t,\pi,(u_{\calK},u_\cals),p) \mid \\
u_\calK \in \prod_{\alpha \in \calK} [-\delta_\alpha(t);\delta_\alpha(t)], u_\cals \in \prod_{A \in \cals} [-\eps_A(t);\eps_A(t)] \big\}
\end{multline*}
is concave on $\RE_{>0}^\cals$. Then, any solution of the differential inclusion
\begin{align*}
\dot{\pi}(t) & \in h(t,\pi(t),u^\ast(t,\pi,p)) \\
\dot{p}(t) & \in - \sum_{B \in \cals} p_B (\partial_{\pi} h_B)(t,\pi(t),u^\ast(t))  \\
u^\ast(t) & \in \argmax_{(u_\calK, u_\cals)} H\big(t,\pi(t),(u_\calK,u_\cals),p(t)\big)
\end{align*}
subject to~(\ref{eq_init_pi}) and $p(\tha) \equiv - \mathds{1}_{\{A=\cdot\}}(\cdot)$ ($p(\tha) \equiv \mathds{1}_{\{A=\cdot\}}(\cdot)$) minimizes (maximizes) the value of $\pi_A(\tha)$.
\end{theorem}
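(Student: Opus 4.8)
The plan is to recognize Theorem~\ref{thm_suff_pont} as a verification theorem in the spirit of the strict (sufficient) Pontryagin maximum principle, and to prove it by a direct convexity/duality argument rather than by invoking dynamic programming. Fix $A \in \cals$ and $0 \le \tha \le T$, and consider the minimization case with terminal costate $p(\tha) \equiv -\mathds{1}_{\{A=\cdot\}}$. Let $(\pi^\ast,p^\ast,u^\ast)$ be any solution of the stated system, and let $\fru \in \calU_{\calK\cup\cals}^\frb$ be an arbitrary admissible control with associated trajectory $\pi^\fru$ satisfying~(\ref{eq_init_pi}). The goal is to show $\pi^\ast_A(\tha) \le \pi^\fru_A(\tha)$, equivalently $-\pi^\ast_A(\tha) \ge -\pi^\fru_A(\tha)$, i.e.\ that the terminal cost at $u^\ast$ is minimal.

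First I would write the difference of terminal costs as an integral along the two trajectories. Since $p^\ast(\tha)\cdot\pi^\ast(\tha) = -\pi^\ast_A(\tha)$ and $p^\ast(\tha)\cdot\pi^\fru(\tha) = -\pi^\fru_A(\tha)$, we have
\[
\pi^\fru_A(\tha) - \pi^\ast_A(\tha) \;=\; p^\ast(\tha)\cdot\big(\pi^\ast(\tha) - \pi^\fru(\tha)\big) \;=\; \int_0^{\tha} \frac{d}{dt}\Big( p^\ast(t)\cdot\big(\pi^\ast(t) - \pi^\fru(t)\big) \Big)\, dt ,
\]
using $\pi^\ast(0)=\pi^\fru(0)$ so the boundary term at $0$ vanishes. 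Expanding the derivative with the costate equation $\dot p^\ast = -(\partial_\pi h)^\top p^\ast$ evaluated along $(\pi^\ast,u^\ast)$ and with $\dot\pi^\ast = h(t,\pi^\ast,u^\ast)$, $\dot\pi^\fru = h(t,\pi^\fru,\fru)$, the integrand becomes
\[
H\big(t,\pi^\ast,u^\ast,p^\ast\big) - H\big(t,\pi^\fru,\fru,p^\ast\big) - (\partial_\pi h)(t,\pi^\ast,u^\ast)^\top p^\ast \cdot \big(\pi^\ast - \pi^\fru\big) ,
\]
where $H(t,\pi,u,p) = p\cdot h(t,\pi,u)$. The key point now is the sign of $p^\ast(t)$: since the terminal value is $-\mathds{1}_{\{A=\cdot\}}$ and the costate ODE is linear with off-diagonal structure inherited from a (transposed) generator, a standard argument — running time backward and using that $-\mathds{1}_{\{A=\cdot\}}$ is nonpositive and that the adjoint preserves the nonpositive orthant — gives $p^\ast(t) \in \RE_{\le 0}^\cals$ for all $t$; after the sign flip we may apply the concavity hypothesis on $\hat H(t,\cdot,q)$ with $q = -p^\ast(t) \in \REz^\cals$. (For the maximization case one takes $p(\tha) = +\mathds{1}_{\{A=\cdot\}}$ and gets $p^\ast(t)\in\REz^\cals$ directly.)

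The main obstacle, and the crux of the argument, is bounding that integrand from below by zero (for the minimization case, i.e.\ showing the integrand is $\le 0$ so that $\pi^\fru_A(\tha) - \pi^\ast_A(\tha) \ge 0$ — signs to be tracked carefully). I would handle it in two moves. (1) Optimality of $u^\ast$: since $u^\ast(t)\in\argmax_u H(t,\pi^\ast(t),u,p^\ast(t))$ in the maximization convention, or the symmetric statement with the appropriate sign for the terminal condition of minimization, we have $H(t,\pi^\ast,u^\ast,p^\ast) \ge H(t,\pi^\ast,\fru(t),p^\ast)$ pointwise a.e., and in fact $H(t,\pi^\ast,u^\ast,p^\ast) = \hat H(t,\pi^\ast, \mp p^\ast)$ up to the sign convention. (2) Concavity: writing $\Psi(\pi) := \hat H(t,\pi,-p^\ast(t))$, which is concave on $\RE_{>0}^\cals$ by hypothesis, and using that both $\pi^\ast(t)$ and $\pi^\fru(t)$ are strictly positive (they are transient distributions of CTMCs started from the positive vector~(\ref{eq_init_pi}), hence remain in $\RE_{>0}^\cals$ — this is where positivity of the dynamics, guaranteed by assumptions $i)$--$iii)$ and $\eps<\epsp$, is used), the supergradient inequality gives
\[
\Psi(\pi^\fru) \le \Psi(\pi^\ast) + \nabla\Psi(\pi^\ast)\cdot(\pi^\fru - \pi^\ast) ,
\]
and one identifies $\nabla\Psi(\pi^\ast) = (\partial_\pi h)(t,\pi^\ast,u^\ast)^\top(-p^\ast)$ by an envelope-theorem argument (the maximizer $u^\ast$ being fixed when differentiating $\hat H$ in $\pi$, valid wherever $\hat H$ is differentiable; at nondifferentiability points any supergradient works). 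Combining (1), (2), the pointwise dominance $H(t,\pi^\fru,\fru,p^\ast) \le \hat H(t,\pi^\fru,-p^\ast)\cdot(\text{sign}) = \Psi(\pi^\fru)$, and collecting signs, the integrand is $\le 0$ a.e.\ on $[0,\tha]$, hence $\pi^\fru_A(\tha) \ge \pi^\ast_A(\tha)$, which is the claim. The subtle bookkeeping is: (a) the consistent translation between the paper's "$\argmax$" Pontryagin convention and the minimization goal via the sign of $p(\tha)$; (b) justifying differentiability of $\hat H$ in $\pi$ almost everywhere, or else phrasing everything with supergradients so no differentiability is needed; and (c) measurability/integrability of the integrand so that the fundamental theorem of calculus applies to the absolutely continuous function $t \mapsto p^\ast(t)\cdot(\pi^\ast(t)-\pi^\fru(t))$ — all of which follow from the local Lipschitz assumption $i)$, measurability of admissible controls, and compactness of the action set.
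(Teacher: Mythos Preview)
Your proposal is correct and follows essentially the same route as the paper: differentiate $t\mapsto p^\ast(t)\cdot(\pi^\fru(t)-\pi^\ast(t))$, integrate over $[0,\tha]$, and bound the integrand using the maximality of $u^\ast$ in $H$ together with the concavity of $\hat H$ in $\pi$. The paper organises the same computation via an integration-by-parts identity and an auxiliary term $p\cdot h' := H(t,\pi,u^\ast,p)$, after first reducing to continuous controls by a density argument, but the skeleton is identical to yours.

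One small caution on your costate-sign step. Your observation that the adjoint (a backward Kolmogorov equation) preserves the sign orthant is correct and is in fact more than the paper writes down. However, having $p^\ast(t)\le 0$ in the minimization case and setting $q=-p^\ast(t)\in\REz^\cals$ does not by itself let you invoke the stated concavity hypothesis, because the integrand involves $\hat H(t,\cdot,p^\ast)$, and in general $\hat H(t,\pi,-p)\ne -\hat H(t,\pi,p)$ (the max does not commute with the sign flip). The paper does not address this point either: it treats only the maximization case explicitly and declares the other ``similar''. In the paper's actual application (Theorem~\ref{thm_suff_cond_for_suff_pont}) $\hat H$ is shown to be \emph{linear} in $\pi$, hence concave for every $p\in\RE^\cals$, so the issue is moot there; just be aware that your sign-flip remark, as written, does not close the gap in the abstract statement.
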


\begin{proof}%[Proof of Theorem~\ref{thm_suff_pont}]
The proof follows the argumentation of~\cite{KAMIEN}. Fix some $u_\calK \in \calU_\calK^\delta \cap \mathcal{C}([0;\tha])$ and $u_\cals \in \calU_\cals^\eps \cap \mathcal{C}([0;\tha])$ and let $\pi$ denote the solution underlying $\dot{\pi}(t) = h(t,\pi(t),(u_{\calK}(t),u_\cals(t)))$. Note that it suffices to consider continuous uncertainties because standard results from ODE theory and functional analysis ensure that the maximal value $\Phi(\eps)$ can be attained by continuous uncertainties, that is
\[
(\Phi_B(\eps))(\tha) = \sup_{u_\calK \in \calC^\delta_\calK} \sup_{u_\cals \in \calC_\cals^\eps} | \pi_B^\fru(\tha) - \pi_B^{0}(\tha) | ,
\]
where $\calC^\delta_\calK = \calU^\delta_\calK \cap \mathcal{C}([0;\tha])$ and $\calC^\eps_\cals = \calU^\eps_\cals \cap \mathcal{C}([0;\tha])$. For the ease of notation, let $\pi^\ast$, $p$ and $u^\ast$ denote a solution of the differential inclusion and set
\begin{align*}
p \cdot h' & := p \cdot h(t,\pi,u^\ast) \\
p \cdot h^\ast & := p \cdot h(t,\pi^\ast,u^\ast) \\
\partial_\pi (p \cdot h^\ast) & := p \cdot (\partial_\pi h)(t,\pi^\ast,u^\ast) ,
\end{align*}
where $\cdot$ denotes the dot product. Thanks to the fact that $\dot{p}(t) = - p(t) \cdot (\partial_{\pi} h)(t,\pi^\ast(t),u^\ast(t))$, integration by parts yields
\begin{align*}
\int_{0}^{\tha} \dot{p} \cdot (\pi - \pi^\ast) dt = [ p \cdot (\pi - \pi^\ast) ]_{0}^{\tha} - \int_{0}^{\tha} p \cdot (h - h^\ast) dt
\end{align*}
With this, it holds that
\begin{align*}
0 & \geq \int_{0}^{\tha} \big( p \cdot h' - p \cdot h^\ast + \partial_\pi(p \cdot h^\ast) \cdot (\pi^\ast - \pi) \big) dt \\
& = \int_{0}^{\tha} \big( p \cdot h' - p \cdot h^\ast + \dot{p} \cdot (\pi - \pi^\ast) \big) dt \\
& = \int_{0}^{\tha} \big( p \cdot h' - p \cdot h^\ast - p \cdot h + p \cdot h^\ast \big) dt + [ p \cdot (\pi - \pi^\ast) ]_{0}^{\tha} \\
& \geq [ p \cdot (\pi - \pi^\ast) ]_{0}^{\tha} \\
& = p(\tha) \cdot (\pi(\tha) - \pi^\ast(\tha)) ,
\end{align*}
where the first inequality is implied by the concavity of $\pi \mapsto p \cdot h^\ast$, while the second inequality follows from the definition of $p \cdot h'$ and the choice of $u^\ast$. In the case where we seek to maximize the value of $\pi_A(\tha)$, we note that $p_\cdot(\tha) = \mathds{1}_{\{A=\cdot\}}$ yields $0 \geq \pi_A(\tha) - \pi^\ast_A(\tha)$. Since the case of minimization is similar, the proof is complete.
\end{proof}

We next identify structural conditions on $(\mathcal{D}^\fru(t))_{t\geq0}$ which can be easily checked and that imply the technical requirement of concavity of Theorem~\ref{thm_suff_pont}. %We will drop them in Section~\ref{sec_suboptimal}.
\begin{itemize}
    \item[\textbf{(A1)}] For any $B,C \in \cals$ and $0 \leq t \leq T$, there exist Lipschitz continuous $k^{B \to C}, k^{B \to C}_i \in [0;T] \to \REz$ such that
    the transition rate function $r_{B,C}$ from Definition~\ref{def_transition_rate} satisfies
    \begin{multline*}
    r_{B,C}\big(V^0(t) + u_\cals,\hk(t) + u_{\calK}\big) \\ = k^{B \to C}(t) + \sum_{i \in \calK \cup \cals} k^{B \to C}_i(t) u_i
    \end{multline*}
    for all $u_\calK \in \RE^\calK$ and $u_\cals \in \RE^\cals$.
    \item[\textbf{(A2)}] For each $i \in \calK \cup \cals$, there exist unique $B_i, C_i \in \cals$ such that $k_i^{B \to C} \not \equiv 0$ implies $B = B_i$, $C = C_i$ and $k_i^{B \to C} > 0$.
\end{itemize}
Assumption \textbf{(A1)} requires, essentially, the transition rate functions to be linear in the uncertainties, while \textbf{(A2)} forbids the same uncertainty to affect more than one transition of the decoupled CTMC $(\mathcal{D}^\fru(t))_{t\geq0}$.

The next example demonstrates that our running example satisfies condition \textbf{(A1)} and \textbf{(A2)}.

\begin{example}\label{ex_a1a2}
Recall that the transition rates of the decoupled CTMC of Example~\ref{ex_sir} are given by~(\ref{eq_a1a2}). Hence, $k^{S \to I} \equiv V^0_I$, $k^{I \to R} \equiv \hk_\beta$ and $k^{R \to S} \equiv 1$ and \textbf{(A1)} holds true. Condition \textbf{(A2)}, instead, follows with $B_I = S$, $C_I = I$, $k_I^{S \to I} \equiv 1$ and $B_\beta = I$, $C_\beta = R$, $k_\beta^{I \to R} \equiv 1$.
\end{example}

The following crucial theorem can be shown in the presence of $\textbf{(A1)}-\textbf{(A2)}$. We wish to stress that the result can be also applied to an ICTMDP which is not induced by an AN.
\begin{theorem}\label{thm_suff_cond_for_suff_pont}
Assume that $\textbf{(A1)}-\textbf{(A2)}$ hold true and fix some $A \in \cals$. Then, the differential inclusion
\begin{align}\label{eq_pont_cost}
\dot{p}_B(t) & \in \sum_{C \in \cals} (p_B(t) - p_C(t)) k^{B \to C}(t) \\
& \quad + \sum_{\substack{i \in \calK \cup \cals : \\ B_i = B}} (p_{B}(t) - p_{C_i}(t)) k_i^{B \to C_i}(t) u^\ast_i(t,p(t)) \nonumber
\end{align}
subject to $p_B(\tha) = - \mathds{1}_{\{A=B\}}$ ($p_B(\tha) = \mathds{1}_{\{A=B\}}$), with $B \in \cals$, $0 \leq t \leq \tha$ and
\begin{align}\label{eq_opt_control}
\psi_i(t,p(t)) & = \big(p_{C_i}(t) - p_{B_i}(t)\big) k_i^{B_i \to C_i}(t) , \nonumber \\
u^\ast_i(t,p(t)) & \in
\begin{cases}
\{ \frb_i(t) \} & , \ \psi_i(t,p(t)) > 0 \\
[-\frb_i(t) ; \frb_i(t)] & , \ \psi_i(t,p(t)) = 0 \\
\{ -\frb_i(t) \} & , \ \psi_i(t,p(t)) < 0 ,
\end{cases}
\end{align}
for $i \in \calK \cup \cals$, has a solution. Moreover, for any solution $p$ of~(\ref{eq_pont_cost}), (\ref{eq_opt_control}), the underlying solution $\pi$ that satisfies~(\ref{eq_init_pi}) and
\begin{align}\label{eq_pont_pi}
\dot{\pi}(t) & = h\big(t,\pi(t),u^\ast(t,p(t))\big) , \quad 0 \leq t \leq \tha
\end{align}
minimizes (maximizes) the value $\pi_A(\tha)$.
\end{theorem}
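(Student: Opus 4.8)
The plan is to reduce Theorem~\ref{thm_suff_cond_for_suff_pont} to the already-proven Theorem~\ref{thm_suff_pont}, which only requires that $\pi \mapsto \hat H(t,\pi,p)$ be concave on $\RE_{>0}^\cals$ for $p \in \REz^\cals$. Concretely, under \textbf{(A1)}--\textbf{(A2)} I would first compute the Hamiltonian $H$ explicitly. Substituting the affine-in-uncertainty form of $r_{B,C}$ from \textbf{(A1)} into $h$ and reorganizing, $H(t,\pi,(u_\calK,u_\cals),p)$ splits as a term $\sum_{B}\sum_{C}(p_C-p_B)k^{B\to C}(t)\pi_B$ that does not depend on the uncertainties, plus, by \textbf{(A2)}, a sum of \emph{decoupled} terms, one per coordinate $i\in\calK\cup\cals$, of the form $(p_{C_i}-p_{B_i})k_i^{B_i\to C_i}(t)\,\pi_{B_i}\,u_i$. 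Because each $u_i$ ranges independently over $[-\frb_i;\frb_i]$ and $k_i^{B_i\to C_i}(t)>0$, the inner maximum over $(u_\calK,u_\cals)$ is attained coordinatewise by the bang-bang/singular rule~(\ref{eq_opt_control}), and one reads off
\[
\hat H(t,\pi,p) \;=\; \sum_{B,C}(p_C-p_B)k^{B\to C}(t)\pi_B \;+\; \sum_{i\in\calK\cup\cals}\big|\,(p_{C_i}-p_{B_i})k_i^{B_i\to C_i}(t)\,\big|\;\pi_{B_i}\,\frb_i .
\]
This is \emph{affine} in $\pi$ (for fixed $t,p$), hence concave on all of $\RE^\cals$, so the hypothesis of Theorem~\ref{thm_suff_pont} holds.

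Next I would verify that the ODE for $p$ in Theorem~\ref{thm_suff_pont}, namely $\dot p_B = -\sum_{B'} p_{B'}(\partial_{\pi_B} h_{B'})(t,\pi,u^\ast)$, specializes exactly to~(\ref{eq_pont_cost}) once the affine form of the rates and the singleton/interval structure \textbf{(A2)} are plugged in; this is a direct but slightly tedious rearrangement of the Kolmogorov right-hand side, using that $\partial_{\pi_B}h_{B'}$ is the $(B',B)$ entry of the (transposed) generator, and that the uncertainty-dependent part of that generator contributes only through the single transition $B_i\to C_i$ for each $i$. In particular $p$ and $u^\ast$ in Theorem~\ref{thm_suff_cond_for_suff_pont} are precisely the quantities $p$ and $u^\ast(t)\in\argmax_{(u_\calK,u_\cals)}H(t,\pi(t),(u_\calK,u_\cals),p(t))$ from Theorem~\ref{thm_suff_pont}, with $\psi_i$ the coefficient of $u_i$ in $H$, matching~(\ref{eq_opt_control}). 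Then optimality of the resulting $\pi$ for $\pi_A(\tha)$ (minimization with $p(\tha)\equiv -\mathds 1_{\{A=\cdot\}}$, maximization with the opposite sign) is immediate from Theorem~\ref{thm_suff_pont}.

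The one genuinely nontrivial point is \emph{existence} of a solution to the coupled system~(\ref{eq_pont_cost})--(\ref{eq_opt_control}), because the selection $u_i^\ast(t,p(t))$ is discontinuous in $p$ when $\psi_i$ changes sign and is set-valued on the singular set $\{\psi_i=0\}$, so the $p$-equation is a genuine differential inclusion rather than an ODE. Here I would argue that the right-hand side, as a map $p\mapsto G(t,p)$, is upper semicontinuous with nonempty compact convex values (the convexification is already built into the $[-\frb_i;\frb_i]$ case of~(\ref{eq_opt_control})) and has at most linear growth in $p$ uniformly on $[0;\tha]$ since the $k$'s are bounded; a standard existence theorem for differential inclusions (e.g. Filippov / Aubin--Cellina) then gives an absolutely continuous solution $p$ on $[0;\tha]$ with the prescribed terminal condition $p(\tha)$. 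Given such $p$, measurability of a corresponding selection $t\mapsto u^\ast(t,p(t))$ (Filippov's selection lemma) and assumptions $i)$--$iii)$ on the AN yield existence and uniqueness of $\pi$ solving~(\ref{eq_pont_pi}) with~(\ref{eq_init_pi}). I expect this existence-for-the-inclusion step, and the care needed to keep $p(t)\in\REz^\cals$ (equivalently $-\mathds 1\le p\le 0$ or $0\le p\le \mathds 1$, which should follow because the $p$-dynamics is that of a transposed sub-generator and hence preserves the order interval), to be the main obstacle; everything else is bookkeeping on top of Theorem~\ref{thm_suff_pont}.
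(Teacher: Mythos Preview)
Your proposal is correct and follows essentially the same route as the paper: compute $H$ under \textbf{(A1)}--\textbf{(A2)}, observe that the maximization over $(u_\calK,u_\cals)$ decouples coordinatewise into the bang-bang rule~(\ref{eq_opt_control}), check that $\hat H$ is affine (hence concave) in $\pi$ on $\RE_{>0}^\cals$, verify that the costate equation of Theorem~\ref{thm_suff_pont} specializes to~(\ref{eq_pont_cost}), and conclude optimality; existence of a solution to the differential inclusion is argued via upper semicontinuity and (local) boundedness, just as you do.

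One remark: your anticipated ``main obstacle'' of keeping $p(t)\in\REz^\cals$ is not actually needed. Your own formula shows that $\hat H(t,\pi,p)$ is affine in $\pi$ for \emph{every} $p\in\RE^\cals$ (because $\pi_{B_i}>0$ lets you pull it outside the absolute value), so the concavity hypothesis of Theorem~\ref{thm_suff_pont} is satisfied along the entire costate trajectory regardless of sign. The paper simply notes linearity of $\hat H$ and moves on; you need not establish any order-interval invariance for $p$.
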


\begin{proof}%[Proof of Theorem~\ref{thm_suff_cond_for_suff_pont}]
In the following, we verify that a solution of~(\ref{eq_pont_cost})-(\ref{eq_pont_pi}) is a solution of the differential inclusion from Theorem~\ref{thm_suff_pont}. To this end, we first observe that
\begin{align*}
& H\big(t,\pi,(u_{\calK},u_\cals),p\big) = \sum_{B \in \cals} p_B h_B(t,\pi,(u_{\calK},u_\cals)) \\
& \quad = \sum_{B,C \in \cals} (p_C - p_B) \Big( k^{B \to C} + \sum_{i \in \calK \cup \cals} k_i^{B \to C} u_i \Big) \pi_B \\
& \quad = \sum_{B,C \in \cals} (p_C - p_B) k^{B \to C} \pi_B \\
& \qquad + \sum_{i \in \calK \cup \cals} (p_{C_i} - p_{B_i}) \pi_{B_i} k_i^{B_i \to C_i} u_i
\end{align*}
Hence, we infer that
\begin{align*}
& \max_{u_{\calK},u_\cals} H(t,\pi,(u_\calK,u_\cals),p) = \sum_{B,C \in \cals} (p_C - p_B) k^{B \to C} \pi_B \\
& \qquad + \sum_{i \in \calK \cup \cals} \max_{u_i} \big( (p_{C_i} - p_{B_i}) k^{B_i \to C_i}_i \pi_{B_i} \big) u_i
\end{align*}
This and Theorem~\ref{thm_suff_pont} show that an optimal control $u^\ast$ must satisfy~(\ref{eq_opt_control}). Moreover, it implies that
\begin{multline*}
\max_{u_{\calK},u_\cals} H(t,\lambda \pi + (1 - \lambda) \pi',(u_\calK,u_\cals),p) \\ = \lambda \max_{u_\calK,u_\cals} H(t,\pi,(u_\calK,u_\cals),p) \\ + (1 - \lambda) \max_{u_\calK,u_\cals} H(t,\pi',(u_\calK,u_\cals),p)
\end{multline*}
for all $0 \leq \lambda \leq 1$ and $\pi, \pi' \in \RE_{>0}^\cals$, thus yielding linearity (and thus also concavity) of $\hh$ on $\RE_{>0}^\cals$. The last statement follows by noting that
\begin{align*}
- \dot{p}_E & = \partial_{\pi_E} \Big( \sum_{B,C \in \cals} (p_C - p_B) k^{B \to C} \pi_B \\
& \quad + \sum_{i \in \calK \cup \cals} (p_{C_i} - p_{B_i}) k^{B_i \to C_i}_i \pi_{B_i} u_i \Big) \\
& = \sum_{C \in \cals} (p_C - p_E) k^{E \to C} \\
& \quad + \sum_{i : B_i = E} (p_{C_i} - p_E) k^{E \to C_i}_i u_i
\end{align*}
for all $E \in \cals$.
\end{proof}

We wish to stress that Theorem~\ref{thm_suff_cond_for_suff_pont} ensures that any solution $p$ of the differential inclusion~(\ref{eq_pont_cost}), (\ref{eq_opt_control}) induces an ODE solution $\pi$ of~(\ref{eq_pont_pi}) such that $\pi_A(\tha) = \pi_A^\ast(\tha)$, where $\pi^\ast_A(\tha)$ denotes the solution of~(\ref{eq_opt}). This stands in stark contrast to the standard version of Pontryagin's principle~\cite{Liberzon} which provides only necessary conditions for optimality, meaning that the value $\pi_A(\tha)$ arising from the standard version~\cite{Liberzon} may fail to satisfy $\pi_A(\tha) = \pi^\ast_A(\tha)$.

Solving a differential inclusion is a challenging task and requires one to assume in practice that it does not exhibit sliding or gazing modes~\cite{NumericDiffInc1,DBLP:conf/qest/Bortolussi11}. Fortunately, the next crucial results states that it is possible to obtain a specific solution of the differential inclusion~(\ref{eq_pont_cost}), (\ref{eq_opt_control}) by solving a Lipschitz continuous ODE system. %Since any solution of the differential inclusion~(\ref{eq_pont_cost}), (\ref{eq_opt_control}) induces a solution of~(\ref{eq_opt}), this allows one to solves optimization problems of the form~(\ref{eq_opt}) by solving two Lipschitz continuous ODE systems of size $|\cals|$.

\begin{theorem}\label{prop_lip}
By replacing~(\ref{eq_opt_control}) with
\begin{align}\label{eq_opt_control_det}
u^\ast_i(t,p(t)) =
\begin{cases}
\frb_i(t) & , \ \psi_i(t,p(t)) \geq 0  \\
-\frb_i(t) & , \ \psi_i(t,p(t)) < 0 \
\end{cases}
\end{align}
the differential inclusion~(\ref{eq_pont_cost}) becomes an ODE system which is Lipschitz continuous in $t$ and $p$. With this change in place, the statement of Theorem~\ref{thm_suff_cond_for_suff_pont} remains valid.
\end{theorem}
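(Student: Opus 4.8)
The plan is to show two things: first, that the modified system $(\ref{eq_pont_cost})$, $(\ref{eq_opt_control_det})$ is genuinely a Lipschitz ODE, and second, that replacing $(\ref{eq_opt_control})$ by $(\ref{eq_opt_control_det})$ does not change the optimal value $\pi_A(\tha)$, so Theorem~\ref{thm_suff_cond_for_suff_pont} still applies. For the first part, observe that the right-hand side of $(\ref{eq_pont_cost})$ is, coordinate-wise, a bilinear-type expression in $p$ and the $u^\ast_i$, with Lipschitz-continuous coefficients $k^{B\to C}(t)$, $k^{B\to C}_i(t)$ in $t$. The only potential source of discontinuity is the selector $u^\ast_i(t,p)$. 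But with $(\ref{eq_opt_control_det})$ each $u^\ast_i$ takes only the two values $\pm\frb_i$, and the prefactor multiplying $u^\ast_i$ in $(\ref{eq_pont_cost})$ is exactly $(p_{B}-p_{C_i})k_i^{B\to C_i}(t) = -\psi_i(t,p)$ (by $(\ref{eq_opt_control})$). Hence the term $(p_{B_i}-p_{C_i})k_i^{B_i\to C_i}(t)\,u^\ast_i(t,p) = -\psi_i(t,p)\,\frb_i\,\mathrm{sgn}^+(\psi_i(t,p))$, where $\mathrm{sgn}^+$ is $1$ on $[0;\infty)$ and $-1$ on $(-\infty;0)$; that is, the product equals $-\frb_i\,|\psi_i(t,p)|$. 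The key point is that although the selector jumps at $\psi_i=0$, it is multiplied by $\psi_i$, which vanishes there, so the \emph{product} is continuous — indeed $x\mapsto |x|$ is globally Lipschitz. Composing with the Lipschitz map $(t,p)\mapsto \psi_i(t,p)$ (affine in $p$, Lipschitz in $t$ on the compact interval $[0;\tha]$) gives a Lipschitz function of $(t,p)$. Summing finitely many such terms and the genuinely Lipschitz drift $\sum_C (p_B-p_C)k^{B\to C}(t)$ shows the full right-hand side of $(\ref{eq_pont_cost})$ is Lipschitz in $(t,p)$, so a unique solution $p$ exists on $[0;\tha]$ by Picard–Lindel\"of (run backwards from $\tha$).

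For the second part, I would argue that any solution $p$ of the Lipschitz ODE is in particular a solution of the differential inclusion $(\ref{eq_pont_cost})$, $(\ref{eq_opt_control})$: whenever $\psi_i(t,p(t))\neq 0$ the two prescriptions coincide, and whenever $\psi_i(t,p(t))=0$ the value $\frb_i$ chosen by $(\ref{eq_opt_control_det})$ lies in $[-\frb_i;\frb_i]$, which is exactly the admissible set in $(\ref{eq_opt_control})$. Therefore $(p,\pi,u^\ast)$ with $\pi$ solving $(\ref{eq_pont_pi})$ is a solution of the differential inclusion, and Theorem~\ref{thm_suff_cond_for_suff_pont} directly yields that $\pi_A(\tha)$ is the minimal (maximal) value. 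Note that on the set where $\psi_i=0$ the choice of $u^\ast_i$ is irrelevant for the Hamiltonian maximization — the term $\psi_i u^\ast_i$ contributes zero regardless — so optimality is unaffected by pinning $u^\ast_i$ to $\frb_i$ there; this is the same observation underlying the Lipschitz-continuity argument above.

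The main obstacle is the Lipschitz claim at the switching surface $\{\psi_i(t,p)=0\}$: one must resist the temptation to bound the discontinuous selector directly and instead exploit the algebraic cancellation $\psi_i\cdot u^\ast_i = \frb_i|\psi_i|$ to see that the product is not merely continuous but Lipschitz. A minor subtlety is handling the case $B_i=C_i$ or coincidences among the pairs $(B_i,C_i)$ for different $i$; by assumption \textbf{(A2)} the $k_i^{B\to C}$ are supported on unique transitions with $k_i>0$, so the bookkeeping of which $u^\ast_i$ enters which coordinate equation is unambiguous, and one simply sums the finitely many Lipschitz contributions. Everything else — existence via Picard–Lindel\"of, and the inclusion check — is routine.
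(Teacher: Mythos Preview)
Your proof is correct and, for the Lipschitz part, takes a cleaner route than the paper. The paper argues in two stages: it first establishes continuity of the drift $\calP$ by a sequential argument (noting that $\psi_i(\tilde t,\tilde p)=0$ forces $\tilde p_{B_i}-\tilde p_{C_i}=0$, so the potentially discontinuous term is damped), and then deduces Lipschitzianity by partitioning $[0;\tha]\times\RE^\cals$ into the open sign-regions $G_i^\pm=\{\pm(p_{C_i}-p_{B_i})>0\}$, observing that $\calP$ is Lipschitz on each finite intersection $\bigcap_i G_i^{s_i}$, and invoking continuity to glue across the boundaries. Your argument collapses both steps into one algebraic identity: the only questionable summand is $(p_{B_i}-p_{C_i})k_i^{B_i\to C_i}(t)\,u^\ast_i(t,p)=-\psi_i(t,p)\,u^\ast_i(t,p)=-\frb_i\,|\psi_i(t,p)|$, and since $x\mapsto|x|$ is globally Lipschitz and $\psi_i$ is affine in $p$ with Lipschitz $t$-coefficient, local Lipschitzianity is immediate. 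This buys you a shorter proof with no case analysis and no separate continuity lemma; the paper's decomposition into sign-regions, on the other hand, makes the piecewise-smooth structure of the vector field more explicit, which could be useful if one later wanted regularity beyond Lipschitz away from the switching surfaces. Your treatment of the second claim (that optimality survives the determinization) is also slightly more explicit than the paper's, which leaves that step implicit; your observation that the deterministic selector is a particular measurable selection for the inclusion, so Theorem~\ref{thm_suff_cond_for_suff_pont} applies verbatim, is exactly right.
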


\begin{proof}%[Proof of Theorem~\ref{prop_lip}]
Let $\calP$ denote the drift of the ODE system~(\ref{eq_pont_cost}) which underlies~(\ref{eq_opt_control_det}), that is
\begin{multline*}
\calP_B(t,p) = \sum_{C \in \cals} (p_B - p_C) k^{B \to C}(t) \\
+ \sum_{i \in \calK \cup \cals} (p_{B_i} - p_{C_i}) k_i^{B_i \to C_i}(t) u^\ast_i(t,p) ,
\end{multline*}
with $u^\ast$ being as in~(\ref{eq_opt_control_det}). Fix some $(\tilde{t},\tilde{p}) \in [0;\tha] \times \RE^\cals$ and pick further two sequences $(t^l,p^l)_l$ and $(\tau^l,\wp^l)_l$ in $[0;\tha] \times \RE^\cals$ which converge both to $(\tilde{t},\tilde{p})$ as $l \to \infty$. We first show that $\calP_B(t^l,p^l) - \calP_B(\tau^l,\wp^l) \to 0$ as $l \to \infty$. To this end, it suffices to observe that any $i \in \calK \cup \cals$ with $\psi_i(\tilde{t},\tilde{p}) = 0$ implies $\tilde{p}_{B_i} - \tilde{p}_{C_i} = 0$ (recall that $k_i^{B \to C} \not \equiv 0$ yields $k_i^{B \to C} > 0$). Hence, it holds that
\begin{align*}
& | (p^l_{B_i} - p^l_{C_i}) k_i^{B_i \to C_i}(t^l) u^\ast_i(t^l,p^l) - \\
& \qquad (\wp^l_{B_i} - \wp^l_{C_i}) k_i^{B_i \to C_i}(\tau^l) u^\ast_i(\tau^l,\wp^l) | \\
& \quad \leq \sup_{0 \leq t \leq \tha} \frb_i(t) k_i^{B_i \to C_i}(t) \big(| p^l_{B_i} - p^l_{C_i}| + |\wp^l_{B_i} - \wp^l_{C_i}|\big) \to 0
\end{align*}
as $l \to \infty$. This shows the continuity of $\calP$. To see also the Lipschitzianity, define
\begin{align*}
G^+_i & = \{ (t,p) \in [0;\tha] \times \RE^\cals \mid  p_{C_i} - p_{B_i} > 0 \} \\
G^-_i & = \{ (t,p) \in [0;\tha] \times \RE^\cals \mid  p_{C_i} - p_{B_i} < 0 \}
\end{align*}
for each $i \in \calK \cup \cals$ with $k_i^{B_i \to C_i} \not \equiv 0$. Note that $\psi_i(t,p) > 0$ if and only if $p_{C_i} - p_{B_i} > 0$ because $k_i^{B_i \to C_i} > 0$ whenever $k_i^{B_i \to C_i} \not \equiv 0$. Moreover, for any $s \in \{-1,+1\}^{\calK \cup \cals}$, $\calP$ is Lipschitz continuous on any bounded subset of $\bigcap_i G^{s_i}_i$ because $k_i^{B_i \to C_i}$ and $k^{B_i \to C_i}$ are Lipschitz continuous on $[0;\tha]$. This shows that $\calP$ is Lipschitz continuous on any bounded subset of $\bigcup_s \bigcap_i G^{s_i}_i$. With this, the continuity of $\calP$ implies that $\calP$ is Lipschitz continuous on any bounded subset of $[0;\tha] \times \RE^\cals$.
\end{proof}

In the remainder of the paper, we replace~(\ref{eq_opt_control}) by~(\ref{eq_opt_control_det}). Theorem~\ref{prop_lip} ensures that~(\ref{eq_pont_cost}) admits a unique solution $p$ and that the underlying optimal uncertainty $u^\ast(\cdot,p(\cdot))$ induces the minimal (maximal) value $\pi_A^\ast(\tha)$ via~(\ref{eq_opt_control_det}) and~(\ref{eq_pont_pi}).

\begin{figure}[tp!]
\centering
%\subfloat[]{%
%\includegraphics[width=0.18\textwidth]{img/sir_p.eps}
%}
%\ \
\subfloat{%
\includegraphics[width=0.18\textwidth]{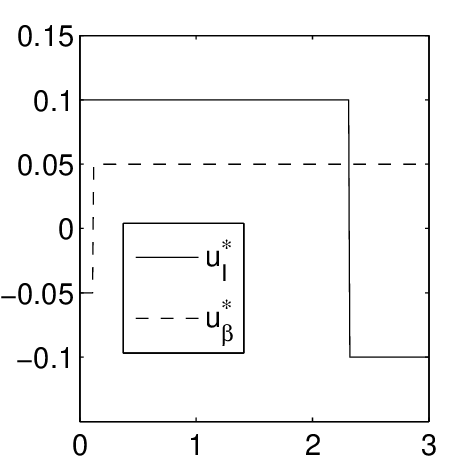}
}
\ \
\subfloat{%
\includegraphics[width=0.18\textwidth]{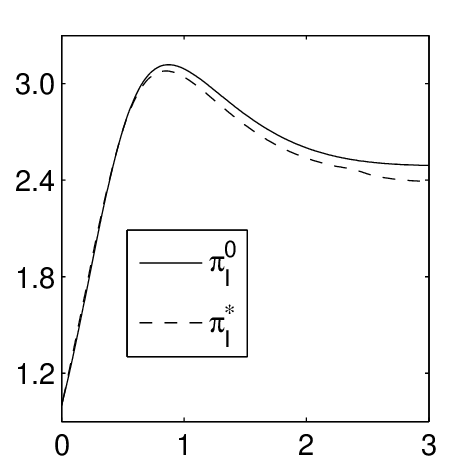}
}
%\ \
%\subfloat[]{%
%\includegraphics[width=0.18\textwidth]{img/sir_piImax.eps}
%}
\caption{Visualization of the optimal control $u^\ast$ and the underlying solution $\pi_I^\ast$ of~(\ref{eq_pont_pi}) whose value $\pi_I^\ast(3.0)$ solves the minimization problem~(\ref{eq_opt}) in the case of $\delta_\beta \equiv 0.05$, $\eps_I \equiv 0.10$, $\pi^0_S(0) = 4$, $\pi^0_I(0) = 1$ and $\pi^0_R(0) = 1$.}\label{fig_sir}
\end{figure}

The next example demonstrates Theorem~\ref{thm_suff_cond_for_suff_pont} and~\ref{prop_lip} in the context of the SIRS model from Example~\ref{ex_sir}.
\begin{example}\label{ex_sir2_3}
We have seen in Example~\ref{ex_a1a2} that our running example satisfies the requirements of Theorem~\ref{thm_suff_cond_for_suff_pont}. In particular, if $\hk \equiv 1$, then~(\ref{eq_pont_cost}) and~(\ref{eq_opt_control_det}) rewrite to
%\begin{example}\label{ex_costate}
\begin{align}\label{eq_costate}
\dot{p}_S(t) & = \big(V_I(t) + u^\ast_I(t,p(t))\big) (p_S(t) - p_I(t))  \\
\dot{p}_I(t) & = \big(\hk(t) + u^\ast_\beta(t,p(t))\big) (p_I(t) - p_R(t)) \nonumber \\
\dot{p}_R(t) & = p_R(t) - p_S(t) \nonumber
\end{align}
and
%\begin{align*}
%u^\ast_I(t,p(t)) & \in
%\begin{cases}
%\{ \eps \} & , \ p_I(t) - p_S(t) > 0 \\
%[- \eps ; \eps ] & , \ p_I(t) - p_S(t) = 0 \\
%\{ -\eps \} & , \ p_I(t) - p_S(t) < 0
%\end{cases} \\
%u^\ast_\beta(t,p(t)) & \in
%\begin{cases}
%\{ 0.05 \} & , \ p_R(t) - p_I(t) > 0 \\
%[-0.05 ; 0.05] & , \ p_R(t) - p_I(t) = 0 \\
%\{ -0.05 \} & , \ p_R(t) - p_I(t) < 0
%\end{cases}
%\end{align*}
%By Theorem~\ref{prop_lip}, we can set $u^\ast_I(t,p(t))$ and $u^\ast_\beta(t,p(t))$ to
\begin{align*}
u^\ast_I(t,p(t)) & =
\begin{cases}
\eps_I(t) & , \ p_I(t) - p_S(t) \geq 0 \\
-\eps_I(t) & , \ p_I(t) - p_S(t) < 0
\end{cases} \\
u^\ast_\beta(t,p(t)) & =
\begin{cases}
\delta_\beta(t) & , \ p_R(t) - p_I(t) \geq 0 \\
-\delta_\beta(t) & , \ p_R(t) - p_I(t) < 0
\end{cases}
\end{align*}
respectively. The minimal value of, say, $\pi^\fru_I(\tha)$ can be obtained as follows. First, solve the ODE system~(\ref{eq_costate}) where the boundary condition is given by $p_I(\tha) = -1$ and $p_S(\tha) = p_R(\tha) = 0$. Afterwards, using the obtained solution $p$, solve the ODE system~(\ref{eq_pont_pi}) using the controls $u^\ast_I(\cdot,p(\cdot))$ and $u^\ast_\beta(\cdot,p(\cdot))$. A possible solution is visualized in Figure~\ref{fig_sir}.
\end{example}

While Theorem~\ref{prop_lip} solves the problem from a theoretical point of view, it has to be noted that a numerical solution $\underline{p}$ of the Lipschitz continuous ODE system~(\ref{eq_opt_control}), (\ref{eq_opt_control_det}) is an approximation of the true solution $p$. Hence, for any $\tilde{t}$ with $\psi_i(\tilde{t},p(\tilde{t})) \approx 0$, the computation of the optimal uncertainty $u_i(\tilde{t})$ may be hindered by the numerical errors underlying the ODE solver. The next crucial result addresses this issue by stating that, essentially, for each such $\tilde{t}$ the choice of $u_i(\tilde{t})$ is not important.

%which can be solved efficiently using numerical ODE solvers~\cite{Gear:1971}.

\begin{theorem}\label{thm_numeric}
For any $\xi > 0$, it is possible to efficiently compute some $\zeta > 0$ such that the following holds. If $\fru \in \calU_{\calK \cup \cals}^\frb$ is such that for all $i \in \calK \cup \cals$ it holds that $\fru_i(t) = u^\ast_i(t,p(t))$ whenever $|\psi_i(t,p(t))| \geq \zeta$, then $|\pi_A^\fru(\tha) - \pi_A^\ast(\tha)| \leq \xi$, where $\pi_A^\fru$ and $\pi_A^\ast$  is the solution of~(\ref{eq_atomic_ode_with_uu}) and~(\ref{eq_opt}), respectively.
\end{theorem}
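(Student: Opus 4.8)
The plan is to show continuity of the map $\fru \mapsto \pi^\fru_A(\tha)$ with respect to a suitable metric on $\calU_{\calK \cup \cals}^\frb$, and to then localize the argument so that the only uncertainties that matter are those near the switching surfaces $\psi_i = 0$. First I would fix $\xi > 0$ and recall that $p$ is the \emph{unique} solution of the Lipschitz ODE system~(\ref{eq_pont_cost}),~(\ref{eq_opt_control_det}), so that the switching functions $t \mapsto \psi_i(t,p(t))$ are fixed, continuous functions on $[0;\tha]$. For a given $\zeta > 0$, let $Z_i(\zeta) = \{ t \in [0;\tha] \mid |\psi_i(t,p(t))| < \zeta \}$; this is an open set whose Lebesgue measure $\lambda(Z_i(\zeta))$ tends to $\lambda(\{ t \mid \psi_i(t,p(t)) = 0\})$ as $\zeta \downarrow 0$. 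Because on the zero set $\psi_i(t,p(t)) = 0$ we have $p_{B_i}(t) - p_{C_i}(t) = 0$ (using $k_i^{B_i \to C_i}(t) > 0$ there, exactly as in the proof of Theorem~\ref{prop_lip}), the \emph{coefficient} $(p_{B_i}(t) - p_{C_i}(t)) k_i^{B_i \to C_i}(t)$ multiplying $u^\ast_i$ in the drift~(\ref{eq_pont_cost}) is also small when $t$ is close to that zero set. This is the structural fact that makes the choice of $u_i(t)$ on $Z_i(\zeta)$ harmless.

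Next I would compare the two trajectories. Let $\pi^\fru$ solve~(\ref{eq_atomic_ode_with_uu}) with the perturbed control $\fru$ and let $\pi^\ast$ solve~(\ref{eq_pont_pi}) with $u^\ast(\cdot,p(\cdot))$; by hypothesis $\fru_i(t) = u^\ast_i(t,p(t))$ off $Z_i(\zeta)$, so the two drifts differ only for $t \in \bigcup_i Z_i(\zeta)$, and there the difference of the two vector fields is bounded, using~(\ref{eq_pont_cost}), by a term of the form $2 \frb_i \, |p_{B_i}(t) - p_{C_i}(t)| \, k_i^{B_i \to C_i}(t)$ summed over $i$, plus the usual $L \norm{\pi^\fru(t) - \pi^\ast(t)}$ Lipschitz term coming from $h$. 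By continuity of $p$ and of $\psi_i(\cdot,p(\cdot))$, for $t \in Z_i(\zeta)$ we have $|p_{B_i}(t) - p_{C_i}(t)| \leq \omega_i(\zeta)$ for some modulus $\omega_i$ with $\omega_i(\zeta) \to 0$ as $\zeta \to 0$ (the zero set of $\psi_i(\cdot,p(\cdot))$ coincides with the zero set of $p_{B_i}(\cdot) - p_{C_i}(\cdot)$, so the sup of $|p_{B_i} - p_{C_i}|$ over $Z_i(\zeta)$ vanishes with $\zeta$). Feeding this into Gr\"onwall's inequality on $[0;\tha]$ gives $\norm{\pi^\fru(\tha) - \pi^\ast(\tha)} \leq \tha \, e^{L \tha} \sum_{i \in \calK \cup \cals} 2 \frb_i \big( \max_{0 \le t \le \tha} k_i^{B_i \to C_i}(t) \big) \omega_i(\zeta)$, which can be made $\leq \xi$ by choosing $\zeta$ small; since $L$, $\tha$, the $\frb_i$, the bounds on $k_i^{B_i \to C_i}$ and the moduli $\omega_i$ are all obtained by numerically integrating the (Lipschitz, hence well-behaved) ODE for $p$, the required $\zeta$ is efficiently computable. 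Finally $\pi^\ast_A(\tha) = \pi^\ast_A(\tha)$ equals the optimal value from~(\ref{eq_opt}) by Theorem~\ref{thm_suff_cond_for_suff_pont} and Theorem~\ref{prop_lip}, so $|\pi^\fru_A(\tha) - \pi^\ast_A(\tha)| \leq \xi$.

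The main obstacle is making rigorous the claim that $|p_{B_i}(t) - p_{C_i}(t)|$ is uniformly small on the whole strip $Z_i(\zeta)$, not merely at the exact zeros: this needs that $t \mapsto p_{B_i}(t) - p_{C_i}(t)$ is continuous (which it is, $p$ being a $C^1$ solution of a Lipschitz ODE) together with compactness of $[0;\tha]$, so that $\sup_{t \in Z_i(\zeta)} |p_{B_i}(t) - p_{C_i}(t)| \to 0$; one has to rule out pathological behaviour where $\psi_i(\cdot,p(\cdot))$ hovers near zero on a set of positive measure while $p_{B_i} - p_{C_i}$ stays bounded away from zero, which cannot happen precisely because $k_i^{B_i \to C_i}$ is bounded above and bounded below by a positive constant on the compact set where it is not identically zero. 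A secondary subtlety is that $\fru$ ranges over \emph{measurable} controls, so the comparison estimate must be the integral (Carathéodory) form of Gr\"onwall, and one should note that $Z_i(\zeta)$ is measurable (indeed open) so that $\int_{Z_i(\zeta)}$ makes sense; both points are routine.
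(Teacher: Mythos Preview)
Your Gr\"onwall argument has a genuine gap: the bound you claim on the drift difference is incorrect. The forward vector field $h(t,\pi,u)$ in~(\ref{eq_atomic_ode_with_uu}) and~(\ref{eq_pont_pi}) does not involve the costate $p$ at all; under \textbf{(A1)}--\textbf{(A2)} one has $\partial_{u_i} h_B(t,\pi,u) = k_i^{B_i\to C_i}(t)\,\pi_{B_i}\,(\mathds{1}_{\{B=C_i\}}-\mathds{1}_{\{B=B_i\}})$, so
\[
\norm{h(t,\pi,\fru)-h(t,\pi,u^\ast)} \ \le \ \sum_{i} 2\frb_i \, k_i^{B_i\to C_i}(t)\,\pi_{B_i}\,\mathds{1}_{Z_i(\zeta)}(t),
\]
with no factor $|p_{B_i}-p_{C_i}|$. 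Your reference to~(\ref{eq_pont_cost}) suggests you have confused the costate drift (where $(p_{B_i}-p_{C_i})k_i$ does multiply $u_i^\ast$) with the state drift. Feeding the correct bound into Gr\"onwall yields an estimate proportional to $\lambda(Z_i(\zeta))$, which need \emph{not} tend to zero: $p$ is $C^1$ and nothing excludes $p_{B_i}-p_{C_i}\equiv 0$ on an interval (a singular arc). On such an interval every admissible $u_i$ is optimal, so all choices give the same terminal value $\pi_A(\tha)$, yet they produce genuinely different trajectories $\pi$; thus $\norm{\pi^\fru-\pi^\ast}$ can be bounded away from zero while $|\pi_A^\fru(\tha)-\pi_A^\ast(\tha)|=0$. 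A Gr\"onwall bound on the full trajectory is therefore the wrong tool here.

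The paper's proof avoids this entirely by reusing the adjoint identity from Theorem~\ref{thm_suff_pont}. Because $\pi\mapsto\hat H(t,\pi,p)$ is \emph{linear} (not merely concave) under \textbf{(A1)}--\textbf{(A2)}, the inequality in that proof becomes an equality and one obtains the exact representation
\[
\pi_A^\ast(\tha)-\pi_A^\fru(\tha)=\int_0^{\tha}\sum_{i}\psi_i(t,p(t))\,\pi^\fru_{B_i}(t)\,(u_i^\ast(t)-\fru_i(t))\,dt .
\]
The crucial structural fact is that the switching function $\psi_i$ itself appears as a multiplicative factor in the integrand: wherever $|\psi_i|\ge\zeta$ the bracket vanishes by hypothesis, and wherever $|\psi_i|<\zeta$ the whole term is bounded by $\zeta\cdot 1\cdot 2\frb_i$. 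This yields the explicit choice $\zeta=\xi/(T\,|\calK\cup\cals|\,c_1 c_2)$ with $c_1=\max_{i,t}k_i^{B_i\to C_i}(t)$ and $c_2=2\max_i\frb_i$, which is what ``efficiently computable'' means---no numerical integration of $p$ or estimation of moduli $\omega_i$ is needed.
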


\begin{proof}%[Proof of Theorem~\ref{thm_numeric}]
With $c_1 := 2 \sup \{\frb_i(t) \mid i \in \calK \cup \cals, 0 \leq t \leq T\}$, $c_2 := \normo{V(0)} \cdot \max \{ | k^{B_i \to C_i}_i(t) | \mid i \in \calK \cup \cals, 0 \leq t \leq T \}$ and $\zeta = \xi / (T \cdot |\calK \cup \cals| c_1 c_2)$, let the function $u \in \calU_{\calK \cup \cals}^\frb$ be such that $u_i(t) = u_i^\ast(t,p(t))$ whenever $|p_{C_i}(t) - p_{B_i}(t)| \geq \zeta$. Using the same notation as in the proof of Theorem~\ref{thm_suff_pont}, we infer in the case when $p(\tha) \equiv \mathds{1}_{\{A = \cdot\}}(\cdot)$ the following:
\begin{align*}
0 & = \int_{0}^{\tha} \big( p \cdot h' - p \cdot h^\ast + \partial_\pi(p \cdot h^\ast) \cdot (\pi^\ast - \pi) \big) dt \\
& = \int_{0}^{\tha} \big( p \cdot h' - p \cdot h^\ast + \dot{p} \cdot (\pi - \pi^\ast) \big) dt \\
& = \int_{0}^{\tha} \big( p \cdot h' - p \cdot h^\ast - p \cdot h + p \cdot h^\ast \big) dt + [ p \cdot (\pi - \pi^\ast) ]_{0}^{\tha} \\
& = \int_{0}^{\tha} \big( p \cdot h' - p \cdot h) dt + p(\tha) \cdot (\pi(\tha) - \pi^\ast(\tha)) \\
& = \int_{0}^{\tha} \big( p \cdot h' - p \cdot h) dt + \pi_A(\tha) - \pi^\ast_A(\tha),
\end{align*}
where the first identity holds true because $\pi \mapsto p \cdot h^\ast$ is linear (see proof of Theorem~\ref{thm_suff_cond_for_suff_pont}), while the other identities follow as in the proof of Theorem~\ref{thm_suff_pont}. The above calculation yields
\begin{align*}
|\pi_A(\tha) - \pi^\ast_A(\tha)| & = \Big| \int_{0}^{\tha} \big( p \cdot h' - p \cdot h) dt \Big| \\
& = \Big| \int_0^{\tha} \sum_{i \in \calK \cup \cals} (p_{C_i}(t) - p_{B_i}(t)) \cdot \pi_{B_i}(t) \cdot \ldots \\
& \qquad \ldots \cdot k_i^{B_i \to C_i}(t) \cdot (u_i^\ast(t,p(t)) - u_i(t)) dt \Big|  \\
& \leq T |\calK \cup \cals| c_1 c_2 \zeta \\
& = \xi ,
\end{align*}
where the second equality follows from the proof of Theorem~\ref{thm_suff_cond_for_suff_pont}.
\end{proof}

The above theorem states, essentially, that the choice of $u_i(t)$ is irrelevant at all time points $t$ with $|\psi_i(t,p(t))| < \zeta$. Hence, an uncertainty $\fru$ that is induced by a numerical solution of~(\ref{eq_pont_cost}) can be used to solve~(\ref{eq_opt}). %At this point it is also worth mentioning that ODE solvers can be applied to ODE systems whose right-hand sides are discontinuous in time~\cite[Section 4.6.3]{Gear:1971}.

%\begin{example}\label{ex_costate_2}
%Overall, Theorem~\ref{prop_lip} and~\ref{thm_numeric} allow us to avoid the pitfalls associated with the theory of differential inclusions.

We end the section by mentioning the following generalization of Theorem~\ref{thm_suff_cond_for_suff_pont}.

%In particular, it eliminates a technical obstacle which often prevents the application of Pontryagin's principle~\cite{Liberzon} in the context of formal verification.

\begin{remark}
The statement of Theorem~\ref{thm_suff_cond_for_suff_pont} extends to the case where one seeks to minimize (maximize) the linear combination $\sum_{A \in \cals} \sigma_A \pi_A(\tha)$, where $\sigma \in \RE^\cals$. The corresponding boundary condition of~(\ref{eq_pont_cost}) is given by $p(\tha) = - \sigma$ ($p(\tha) = \sigma$).
\end{remark}

\subsection{Proof of Theorem~\ref{thm_main_bound}}\label{sec_thm_main_bound}

Armed with Theorem~\ref{thm_suff_cond_for_suff_pont} and Theorem~\ref{prop_lip}, we are now in a position to prove Theorem~\ref{thm_main_bound} under the assumption that the decoupled CTMC from Definition~\ref{def_atomic_ode_with_uu} satisfies \textbf{(A1)} and~\textbf{(A2)}. The assumption will be dropped in Section~\ref{sec_suboptimal}.

\begin{proof}[Proof of Theorem~\ref{thm_main_bound}]
Let us assume towards a contradiction that $\delta$ and $\eps$ are positive piecewise constant functions, that $\hk$ is analytic and that there exists an analytic uncertainty function $u_\calK \in \calU_\calK^\delta$, a time $0 < \tha \leq T$ and some $A \in \cals$ such that $|V^{u_\calK}_A(\tha) - V^0_A(\tha)| = \eps_A(\tha)$ and $|V^{u_\calK}_B(t) - V^0_B(t)| < \eps_B(t)$ for all $B \in \cals$ and $0 \leq t < \tha$. Since $\eps_A(\tha) > 0$, we may assume without loss of generality that $V_A^{u_\calK}(\tha) > V_A^0(\tha)$. With this, we consider the optimization problem
\begin{multline}\label{eq_opt_proof}
\text{compute the maximal value of $\pi^{(u'_{\calK},u'_\cals)}_A(\tha)$ } \\
\text{ such that } \dot{\pi}(t) = h\big(t,\pi(t),(u'_{\calK}(t),u'_\cals(t))\big)  \\
\text{ subject to (\ref{eq_init_pi}) and $(u'_{\calK},u'_\cals) \in \calU_{\calK}^\delta \times \calU_\cals^{\eps}$} ,
\end{multline}
where $h$ is as in~(\ref{eq_atomic_ode_with_uu}). Let $\pi^\ast_A(\tha)$ denote the solution of~(\ref{eq_opt_proof}) and set $\fru := (u_\calK,u_\cals)$ with $u_\cals := V^{u_\calK} - V^0$. Since $\eps_A(\tha) = |\pi_A^{\fru}(\tha) - \pi_A^0(\tha)| \leq |\pi_A^\ast(\tha) - \pi_A^0(\tha)| = (\Phi_A(\eps))(\tha) \leq \eps_A(\tha)$, we infer $\pi^\ast_A(\tha) = \pi_A^\fru(\tha)$. Hence, $\fru$ is an optimal control and Theorem~\ref{thm_suff_cond_for_suff_pont} and~\ref{prop_lip} imply that $u_i(t) = u_i^\ast(t)$ whenever $\psi_i(t,p(t)) \neq 0$, where $u_i^\ast(t)$ is as in~(\ref{eq_opt_control_det}), $i \in \calK \cup \cals$ and $p$ solves~(\ref{eq_pont_cost}) and~(\ref{eq_opt_control_det}). At the same time, the analyticity of $u_\calK$, $\hk$ and $(\Theta_j)_j$ implies that $u_\cals = V^{u_\calK} - V^0$ is analytic as well~\cite{DBLP:journals/iandc/BortolussiH15}. Since $\delta$ and $\eps$ are piecewise constant and none of the $u_i$ can be locally constant (otherwise the $u_i$ in question would be constant on the whole $[0;T]$ by the identity theorem), we infer that $\psi_i(\cdot,p(\cdot)) \equiv 0$ for all $i \in \calK \cup \cals$. Recall from the proof of Theorem~\ref{thm_numeric} that
\begin{align}\label{eq_thm_main_bound}
\pi^\ast_A(\tha) - \pi_A^\fru(\tha) = \int_0^{\tha} \big( p \cdot h' - p \cdot h) dt
\end{align}
and that the Hamiltonian $H(t'',\pi'',(u''_\calK,u''_\cals),p'')$ is invariant with respect to the value of $u''_i$ when $\psi_i(t'',p'') = 0$. This, the above discussion and~(\ref{eq_thm_main_bound}) imply that $\pi^\ast_A(\tha) = \pi_A^{(u'_{\calK},u'_\cals)}(\tha)$ for any uncertainty $(u'_\calK,u'_\cals)$. As this contradicts $V_A^{u_\calK}(\tha) > V_A^0(\tha)$, we infer the statement of the theorem in the case where $\delta$ and $\eps$ are piecewise constant and $u_\calK$ and $\hk$ are analytic. Thanks to the fact that analytic and piecewise constant functions are dense in set of bounded measurable functions on $[0;T]$, this suffices the claim.
\end{proof}

\subsection{Sub-Optimal Solutions for inhomogeneous CTMDPs}\label{sec_suboptimal}

It may happen that the decoupled CTMC from Definition~\ref{def_atomic_ode_with_uu} violates \textbf{(A1)} or \textbf{(A2)}. We next discuss a procedure which allows one to transform a CTMC violating \textbf{(A1)}-\textbf{(A2)} into one which satisfies \textbf{(A1)}-\textbf{(A2)}. We convey the main ideas using concrete examples.

The extension of Example~\ref{ex_sir} discussed next induces a decoupled CTMC which violates~\textbf{(A1)}.

\begin{example}\label{ex_sir_atomic_uu_2}
Consider the agent network $(\{S,I,R\},$ $\{\alpha,\beta\},$ $\{\Theta_1,\Theta_2,\Theta_3\})$ given by
\begin{align*}
& R_1 \! = \! \{ S \to I, I \to I \},   & &   R_2 \! = \! \{ I \to R \},                     & & R_3 \! = \! \{ R \to S \}, \\
& \Theta_1(V,\kappa) \! = \! \kappa_\alpha V_S V_I,   & &   \Theta_2(V,\kappa) \! = \! \kappa_\beta V_I,   & & \Theta_3(V,\kappa) \! = \! V_R ,
\end{align*}
where $V = (V_S,V_I,V_R)$ and $\kappa = (\kappa_\alpha, \kappa_\beta)$. Let the time-varying uncertain infection and recovery parameter functions be given by $\kappa_\alpha \equiv \hk_\alpha + u_\alpha$ and $\kappa_\beta \equiv \hk_\beta + u_\beta$, respectively, where $u = (u_\alpha, u_\beta) \in \calU^\delta_{\{\alpha,\beta\}}$ and $\delta = (\delta_\alpha,\delta_\beta)$. Then, the AN induces the reactions
\begin{align*}
S + I & \act{(\hk_\alpha + u_\alpha) V_S V_I} I + I, & I & \act{(\hk_\beta + u_\beta) V_I} R, &  R & \act{V_R} S
\end{align*}
%and the transition rates of the coupled CTMC from Definition~\ref{def_transition_rate} are
%\begin{align*}
%r_{S,I}(V^u,\hk + u) & = (\hk_\alpha + u_\alpha) V_I \\
%r_{I,R}(V^u,\hk + u) & = (\hk_\beta + u_\beta) \\
%r_{R,S}(V^u,\hk + u) & = 1
%\end{align*}
and the transition rates of the decoupled CTMC from Definition~\ref{def_atomic_ode_with_uu} are given by
\begin{align*}
r_{S,I}(V^0 + u_\cals,\hk + u_\calK) & = (\hk_\alpha + u_\alpha) (V^0_I + u_I) \\
r_{I,R}(V^0 + u_\cals,\hk + u_\calK) & = \hk_\beta + u_\beta \\
r_{R,S}(V^0 + u_\cals,\hk + u_\calK) & = 1
\end{align*}
Since
\[
(\hk_\alpha + u_\alpha) (V^0_I + u_I) = \hk_\alpha V^0_I + \hk_\alpha u_I + u_\alpha V^0_I + u_\alpha u_I
\]
leads to the nonlinear term $u_\alpha u_I$, the decoupled CTMC does not satisfy \textbf{(A1)}.
\end{example}
The idea is to substitute any nonlinear expression of uncertainties by a new uncertainty that bounds the original nonlinear expression. For instance, in the case of Example~\ref{ex_sir_atomic_uu_2}, we substitute $u_\alpha u_I$ with the new uncertainty $u_{\alpha|I}$ and set $\frb_{\alpha|I} := \frb_\alpha \frb_I$ because $|u_{\alpha}(\cdot) u_I(\cdot)| \leq \frb_\alpha(\cdot) \frb_I(\cdot)$.

This motivates the following concept.

\begin{definition}\label{def_envelope}
For $\eps < V^0$ a family of transition rates $(\hr_{B,C})_{B,C}$ is an envelope of the transition rates $(r_{B,C})_{B,C}$ from Definition~\ref{def_atomic_ode_with_uu} if there exist Lipschitz continuous functions $k^{B \to C}, k^{B \to C}_i \in [0;T] \to \REz$, an index set $\calI$ with $\calI \cap (\calK \cup \cals) = \emptyset$ and a piecewise continuous function $b : [0;T] \to \RE^{\calI}_{>0}$ such that for all $(u_{\calK},u_\cals) \in \calU_{\calK}^{\delta} \times \calU_\cals^\eps$ one can pick a $u_\calI \in \calU_\calI^{b}$ so that
\begin{multline*}
r_{B,C}\big(V^0(t) + u_\cals(t), \hk(t) + u_\calK(t)\big) \\
= \underbrace{k^{B \to C}(t) + \sum_{i \in \calK \cup \cals \cup \calI} k^{B \to C}_i(t) u_i(t)}_{\displaystyle \hr_{B,C}(t,u_\calK(t),u_\cals(t),u_\calI(t)) :=}
\end{multline*}
for all $0 \leq t \leq T$. %Further, we require $\hr_{B,C} \geq 0$ for all $\fru = (u_\calK,u_\cals,u_\calI) \in \calU_{\calK \cup \cals \cup \calI}^\frb$.
\end{definition}
A possible envelope of the decoupled CTMC from Example~\ref{ex_sir_atomic_uu_2} is given by $\hr_{I,R} := r_{I,R}$, $\hr_{R,S} := r_{R,S}$ and
\begin{align*}
& \hr_{S,I}(t,u_\calK(t),u_\cals(t),u_\calI(t)) \\
& \qquad := \hk_\alpha(t) V^0_I(t) + \hk_\alpha(t) u_I(t) + u_\alpha(t) V^0_I(t) + u_{\alpha|I}(t) ,
\end{align*}
with $\calI = \{\alpha|I\}$ and $\frb_{\alpha|I} := \frb_\alpha \frb_I$.

By construction, any envelope satisfies \textbf{(A1)}. It may however happen that an envelope does not satisfy \textbf{(A2)}. To see this on an example, we extend Example~\ref{ex_sir_atomic_uu_2} to the multi-class SIRS model~\cite{dsn16BortolussiGast} in which the overall population of agents is partitioned into classes, thus providing a better picture of the actual spread dynamics~\cite{dsn13IacobelliTribastone}.

\begin{example}\label{exex_multi_sir}
With $N \geq 2$ being the number of classes, the multi-class SIRS agent network is given by the atomic reactions
\begin{align*}
R^{\nu,\mu}_1 & = \{ S_\nu \to I_\nu, I_\mu \to I_\mu \}, & \Theta^{\nu,\mu}_1(V,\kappa) & = \kappa_{\alpha_{\nu,\mu}} V_{S_\nu} V_{I_\mu} , \nonumber \\
R^{\nu}_2 & = \{ I_\nu \to R_\nu \}, & \Theta^{\nu}_2(V,\kappa) & = \kappa_{\beta_\nu} V_{I_\nu} , \nonumber \\
R^{\nu}_3 & = \{ R_\nu \to S_\nu \}, & \Theta^{\nu}_3(V,\kappa) & = \kappa_{\gamma_\nu} V_{R_\nu} ,
\end{align*}
where $1 \leq \nu, \mu \leq N$. In the case where all rates are subject to uncertainty, the reactions are
\begin{align}\label{ex_multi_sir}
S_\nu + I_\mu & \act{(\hk_{\alpha_{\nu,\mu}} + u_{\alpha_{\nu,\mu}}) V_{S_\nu} V_{I_\mu}} I_\nu + I_\mu \\
I_\nu & \act{(\hk_{\beta_\nu} + u_{\beta_\nu}) V_{I_\nu}} R_\nu \nonumber \\
R_\nu & \act{(\hk_{\gamma_\nu} + u_{\gamma_\nu}) V_{R_\nu}} S_\nu \nonumber
\end{align}
The first reaction expresses the fact that a susceptible agent of class $\nu$ may be infected by an infected agent from class $\mu$. The transition rates of the decoupled CTMC are
\begin{align*}
r_{S_\nu,I_\nu}(V^0 + u_\cals,\hk + u_\calK) & = \sum_\mu (\hk_{\alpha_{\nu,\mu}} + u_{\alpha_{\nu,\mu}}) (V^0_{I_\mu} + u_{I_\mu})  \\
r_{I_\nu,R_\nu}(V^0 + u_\cals,\hk + u_\calK) & = \hk_{\beta_\nu} + u_{\beta_\nu} \\
r_{R_\nu,S_\nu}(V^0 + u_\cals,\hk + u_\calK) & = \hk_{\gamma_\nu} + u_{\gamma_\nu}
\end{align*}
The nonlinear terms $u_{\alpha_{\nu,\mu}} u_{I_\mu}$ prevent the decoupled CTMC to satisfy \textbf{(A1)}. Similarly to Example~\ref{ex_sir_atomic_uu_2}, we thus consider the envelope
\begin{align}\label{ex_sir_multi_envelope}
\hr_{S_\nu,I_\nu} & := \sum_\mu \big( \hk_{\alpha_{\nu,\mu}} V^0_{I_\mu} + \hk_{\alpha_{\nu,\mu}} u_{I_\mu} + u_{\alpha_{\nu,\mu}} V^0_{I_\mu} + u_{\alpha_{\nu,\mu}|I_\mu} \big) \nonumber \\
\hr_{I_\nu,R_\nu} & := \hk_{\beta_\nu} + u_{\beta_\nu} \nonumber \\
\hr_{R_\nu,S_\nu} & := \hk_{\gamma_\nu} + u_{\gamma_\nu}
\end{align}
with $\frb_{\alpha_{\nu,\mu}|I_\mu} := \frb_{\alpha_{\nu,\mu}} \frb_{I_\mu}$. Unfortunately, envelope~(\ref{ex_sir_multi_envelope}) violates \textbf{(A2)} because each $u_{I_\mu}$ is contained in $\hr_{S_1,I_1}$, \ldots, $\hr_{S_N,I_N}$. %Put different, each $u_{I_\mu}$ affects more than one transition rate.
\end{example}

We continue by observing that envelope~(\ref{ex_sir_multi_envelope}) can be transformed into a set of transition rates which satisfies~\textbf{(A1)} and~\textbf{(A2)}. Indeed, if we substitute in each $\hr_{S_\nu,I_\nu}$ from~(\ref{ex_sir_multi_envelope}) the uncertainty $u_{I_\mu}$ with $u_{I_{\nu,\mu}}$, the transition rates
\begin{align}\label{ex_sir_multi_rates}
\tr_{S_\nu,I_\nu} & := \sum_\mu \Big( \hk_{\alpha_{\nu,\mu}} V^0_{I_\mu} + \hk_{\alpha_{\nu,\mu}} u_{I_{\nu,\mu}} + u_{\alpha_{\nu,\mu}} V^0_{I_\mu} + u_{\alpha_{\nu,\mu}|I_\mu} \big) \nonumber \\
\tr_{I_\nu,R_\nu} & := \hk_{\beta_\nu} + u_{\beta_\nu} \nonumber \\
\tr_{R_\nu,S_\nu} & := \hk_{\gamma_\nu} + u_{\gamma_\nu}
\end{align}
define a CTMC which satisfies~\textbf{(A1)} and~\textbf{(A2)}. This is because every uncertainty function $u_i$, where $i \in \calK \cup \cals \cup \calI$ and
\begin{align*}
\calI & = \{ \alpha_{\nu,\mu}|I_\mu \mid 1 \leq \nu, \mu \leq N \} \cup \{ I_{\nu,\mu} \mid 1 \leq \nu, \mu \leq N \}
\end{align*}
with $\frb_{I_{\nu,\mu}} := \frb_{I_\mu}$ and $\frb_{\alpha_{\nu,\mu}|I_\mu} := \frb_{\alpha_{\nu,\mu}} \frb_{I_\mu}$, appears in exactly one transition rate $\tr_{B,C}$.

This above discussion motivates the following.

\begin{definition}\label{def_coars}
Assume that $(\hr_{B,C})_{B,C}$ is an envelope of $(r_{B,C})_{B,C}$ given by
\[
\hr_{B,C} = k^{B \to C} + \sum_{i \in \calK \cup \cals \cup \calI} k^{B \to C}_i u_i
\]
for all $B,C \in \cals$ and let $\calI_0 \dot \cup \calI_1 = \calK \cup \cals \cup \calI$ be such that $(\hr_{B,C})_{B,C}$ violates \textbf{(A2)} for each $i \in \calI_1$. Then, the transition rate $\tr_{B,C}$ arises from $\hr_{B,C}$ by substituting each occurrence of $u_i$ in $\hr_{B,C}$ with $u_{i | B \to C}$, where $i \in \calI_1$. By setting $\frb_{i | B \to C} := \frb_i$, the coarsening of $(\hr_{B,C})_{B,C}$ is given by $(\tr_{B,C})_{B,C}$.
\end{definition}

\begin{remark}
Note that~(\ref{ex_sir_multi_rates}) is, up to a renaming of indices, a coarsening of the envelope~(\ref{ex_sir_multi_envelope}). This can be seen by substituting each $u_{I_{\nu,\mu}}$ with $u_{I_{\mu} | S_\nu \to I_\nu}$.
\end{remark}

The next result states that the coarsening of an envelope of the decoupled CTMC from Definition~\ref{def_atomic_ode_with_uu} allows one to estimate $\calE$ from Definition~\ref{def_max_dev}.

\begin{theorem}\label{thm_ctmc_estimator}
Given the decoupled CTMC from Definition~\ref{def_atomic_ode_with_uu}, let us assume that $(\hr_{B,C})_{B,C}$ is an envelope for $(r_{B,C})_{B,C}$. Let further $(\tr_{B,C})_{B,C}$ denote the coarsening of $(\hr_{B,C})_{B,C}$ as given in Definition~\ref{def_coars} and let
\begin{align}\label{eq_tilde_pi}
\dot{\tilde{\pi}}^\frut(t) & = \tilde{h}(t,\tilde{\pi}^\frut(t), \frut(t)) \nonumber \\
& := - \sum_{C : C \neq B} \tr_{B,C}(t,\frut(t)) \tilde{\pi}^\frut_B(t) \nonumber \\
& \qquad + \sum_{C : C \neq B} \tr_{C,B}(t,\frut(t)) \tilde{\pi}^\frut_C(t)
\end{align}
denote the Kolmogorov equations underlying the coarsening $(\tr_{B,C})_{B,C}$. Set further $\tilde{\pi}^\fru(0) = V(0)$ and
\[
(\Psi_B(\eps))(t) = \sup \big\{ |\tilde{\pi}_B^\frut(t) - \tilde{\pi}_B^{0}(t)| \ \big| \ \frut \in \calU^\delta_\calK \times \calU^\eps_\cals \times \calU^b_\calI \big\}
\]
Then, applying the fixed point iteration algorithm of Theorem~\ref{thm_fp} to $\Psi$ instead of $\Phi$ yields a bound on $\calE$. Moreover, Theorem~\ref{thm_suff_cond_for_suff_pont} and Theorem~\ref{prop_lip} carry over to the extended set of uncertainties $\calU^\delta_\calK \times \calU^\eps_\cals \times \calU^b_\calI$ and can be used to compute $\Psi$.
\end{theorem}

\begin{algorithm}[t!]
\caption{Over-Approximation Routine.}\label{algorithm_main}
\begin{algorithmic}[1]

\REQUIRE An agent network $(\cals,\calK,\calF)$ and uncertainty set $\calU_{\calK}^\delta$, a finite time horizon $T > 0$, some (small) positive function $\eps^{(0)}$ and a numerical threshold $\eta \in (0;1)$.

\ENSURE Formal bound of $\mathcal{E}$ from Definition~\ref{def_max_dev}.

\STATE \textbf{compute} the transition rates $(r_{B,C})_{B,C}$ of the
\STATE \qquad \qquad decoupled CTMC from Definition~\ref{def_atomic_ode_with_uu}

\IF{$(r_{B,C})_{B,C}$ violates \textbf{(A1)}}
    \STATE \textbf{compute} an envelope $(\tr_{B,C})_{B,C}$ of $(r_{B,C})_{B,C}$ \label{alg_line_env}
%    \STATE \qquad \qquad using Algorithm~\ref{algorithm_envelope}
\ELSE
    \STATE \textbf{set} $(\tr_{B,C})_{B,C}$ to $(r_{B,C})_{B,C}$ and $\calI$ to $\emptyset$
\ENDIF

\STATE \textbf{compute} the coarsening $(\hr_{B,C})_{B,C}$ of $(\tr_{B,C})_{B,C}$ \label{alg_line_coars}
%\STATE \qquad \qquad as given in Definition~\ref{def_coars}

\STATE \textbf{set} $\epso$ to zero $\eps^{(0)}$
\WHILE{\TRUE}
    \STATE \textbf{compute} $\Psi(\epso)$ from Theorem~\ref{thm_ctmc_estimator} using Theorem~\ref{thm_suff_cond_for_suff_pont} \label{alg_line_psi}
    \STATE \textbf{set} $\epsn$ to $\Psi(\epso)$
    \IF{\textbf{not }($\epsn < V^0$)}
        \RETURN $\infty$
    \ELSIF{\big($\eta \geq \max_{A \in \cals, t \in [0;T]} |\epsn_A(t) - \epso_A(t)|$\big)}
        \RETURN $\epsn$
    \ENDIF
    \STATE \textbf{set} $\epso$ to $\epsn$
\ENDWHILE

\end{algorithmic}
\end{algorithm}

\begin{proof}
%Since this implies that any solution $\pi^{\fru}$ of~(\ref{eq_atomic_ode_with_uu}) is a solution of~(\ref{eq_tilde_pi}), we conclude that $\Phi(\eps) \leq \Psi(\eps)$ for all $0 \leq \eps < V^0$.
To see that $\Psi(\eps) \leq \eps$ implies $\calE \leq \eps$, let $\eps$, $\delta$, $\hk$, $u_\calK$, $u_\cals$, $\fru$, $\tha$ and $A$ be as in the proof from Section~\ref{sec_thm_main_bound}. Then, it holds that $\eps_A(\tha) = |\pi_A^{\fru}(\tha) - \pi_A^0(\tha)| \leq |\tilde{\pi}_A^\ast(\tha) - \tilde{\pi}_A^0(\tha)| = (\Psi_A(\eps))(\tha) \leq \eps_A(\tha)$, where $\tilde{\pi}_A^\ast(\tha)$ solves
\begin{multline*}%\label{eq_opt_proof_2}
\text{compute the maximal value of $\tilde{\pi}^{(u'_{\calK},u'_\cals,u'_\calI)}_A(\tha)$ } \\
\text{ such that } \dot{\tilde{\pi}}(t) = \tilde{h}\big(t,\tilde{\pi}(t),(u'_{\calK}(t),u'_\cals(t),u'_\calI(t))\big),  \\
\text{ (\ref{eq_init_pi}) and $(u'_{\calK},u'_\cals,u'_\calI) \in \calU_{\calK}^\delta \times \calU_\cals^{\eps} \times \calU_\calI^{b}$}
\end{multline*}
and the inequality $|\pi_A^{\fru}(\tha) - \pi_A^0(\tha)| \leq |\tilde{\pi}_A^\ast(\tha) - \tilde{\pi}_A^0(\tha)|$ holds true because the definition of the envelope and the coarsening of an envelope ensure that for any function $(u_{\calK},u_\cals) \in \calU_{\calK}^{\delta} \times \calU_\cals^\eps$ there exists some function $u_\calI \in \calU_\calI^{b}$ such that
\begin{multline*}
r_{B,C}\big(V^0(t) + u_\cals(t), \hk(t) + u_\calK(t)\big) \\
= k^{B \to C}(t) + \sum_{i \in \calK \cup \cals \cup \calI} k^{B \to C}_i(t) u_i(t) \\
= \tr_{B,C}(t,u_\calK(t),u_\cals(t),u_\calI(t))
\end{multline*}
for all $0 \leq t \leq T$. This implies $\tilde{\pi}_A^\ast(\tha) = \pi_A^{\fru}(\tha)$. Moreover, it can be observed that Theorem~\ref{thm_suff_cond_for_suff_pont} and Theorem~\ref{prop_lip} hold for any CTMC which transition rates satisfy~\textbf{(A1)} and~\textbf{(A2)}. Hence, they can be used to compute $\tilde{\pi}_A^\ast(\tha)$ (just replace the index set $\calK \cup \cals$ with $\calK \cup \cals \cup I$). This and the definition of the envelope and the coarsening of an envelope ensure the existence of some $u_\calI \in \calU_\calI^{b}$ such that $\frut = (u_\calK,u_\cals,u_\calI)$ is optimal, i.e., $\tilde{\pi}_A^\ast(\tha) = \tilde{\pi}_A^{\frut}(\tha)$. In the case $b_i(\cdot)$ is piecewise constant for all $i \in \calI$, the argumentation from Section~\ref{sec_thm_main_bound} leads to the desired contradiction. With this, the density argument from Section~\ref{sec_thm_main_bound} implies that $\calE \leq \eps$. Since  $\Psi$ is monotonic increasing, the proof is complete.
\end{proof}

It is interesting to note that Theorem~\ref{thm_main_bound} allows one to derive bounds on $\mathcal{E}$ from Definition~\ref{def_max_dev} using any estimation technique that applies to~(\ref{eq_opt}). Put different, Theorem~\ref{thm_suff_cond_for_suff_pont} and~\ref{prop_lip} can be replaced, in principle, by any over-approximation technique applicable to time-varying linear systems with uncertain additive and multiplicative uncertainties. Note, however, that the bounds obtained by Theorem~\ref{thm_suff_cond_for_suff_pont} and~\ref{prop_lip} cannot be improved if the decoupled CTMC satisfies $\textbf{(A1)}$-$\textbf{(A2)}$ and can be expected to be tight even if $\textbf{(A1)}$-$\textbf{(A2)}$ are violated because Theorem~\ref{thm_ctmc_estimator} relies on optimal control theory.

%Moreover, while there is a rich body of literature covering over-approximation techniques of linear ODE systems (see~\cite{Kurzhanski2000,Girard2006,B-GirLeG08a} and references therein), to the best of our knowledge, only~\cite{Althoff2011a} can be used to estimate~(\ref{eq_opt}). The bounds of the more general approach~\cite{Althoff2011a} however can be expected to be less tight on the domain of ICTMDPs than those of Theorem~\ref{thm_suff_cond_for_suff_pont}. To see this, assume that the transition matrix of the CTMC has the time-varying transition rate function $1.00 + 0.50 \sin(t) + u(t)$, where $u$ denotes an uncertainty with $|u(\cdot)| \leq 0.05$. Then, the underlying interval matrix from~\cite{Althoff2011a} will have the conservative interval entry $[0.45;1.55]$ and, consequently, induce less tight bounds than Theorem~\ref{thm_suff_cond_for_suff_pont}.

\begin{algorithm}[t!]
\caption{Envelope computation for agent networks that have as reaction rate functions $\calF$ multivariate polynomials.}\label{algorithm_envelope}
\begin{algorithmic}

\REQUIRE Transition rates $(r_{B,C})_{B,C}$ given in terms of polynomials with variables $\{V_A^0(t) \mid A \in \cals\} \cup \{u_i \mid i \in \cals \cup \calK\}$
\ENSURE Envelope $(\tr_{B,C})_{B,C}$ of $(r_{B,C})_{B,C}$, index set $\calI$ of new uncertainties

\STATE \textbf{set} $\calI$ to $\emptyset$
\FORALL{$B,C \in \cals$}
    \FOR{\textbf{each} nonlinear uncertainty $\prod_{i \in \calK \cup \cals} u_i^{e_i}$ in $r_{B,C}$}
        \STATE \textbf{add} $e$ to $\calI$, where $e \in \mathbb{N}_0^{\calK \cup \cals}$ denotes the exponent
        \STATE \qquad of the current monomial
        \STATE \textbf{replace} $\prod_{i \in \calK \cup \cals} u_i^{e_i}$ by the new uncertainty $u_e$
        \STATE \textbf{set} $\frb_e(\cdot)$ to $\prod_{i \in \calK \cup \cals} \frb_i^{e_i}(\cdot)$
    \ENDFOR
\ENDFOR

\RETURN $(r_{B,C})_{B,C}$ and $\calI$

\end{algorithmic}
\end{algorithm}

\subsection{Algorithm}\label{sec_impl}

The previous sections gives rise to Algorithm~\ref{algorithm_main} which summarizes all steps of our approximation technique. Apart from line~\ref{alg_line_env} that has to compute an envelope of the transition rates of the decoupled CTMC from Definition~\ref{def_atomic_ode_with_uu} (see Definition~\ref{def_envelope}), all steps of Algorithm~\ref{algorithm_main} can be automatized. Indeed, the computation of the coarsening in line~\ref{alg_line_coars} is the variable substitution introduced in Definition~\ref{def_coars}, while $\Psi(\epso)$ in line~\ref{alg_line_psi} can be obtained by applying, for any $A \in \cals$ and $0 \leq \tha \leq T$, Theorem~\ref{thm_suff_cond_for_suff_pont} in order to compute the maximal and minimal value of $\tilde{\pi}^{\frut}_A(\tha)$ across all $\frut \in \calU^\delta_\calK \times \calU^\eps_\cals \times \calU^b_\calI$.

\emph{Computation of an envelope.} In the case the reaction rate functions of the agent network are multivariate polynomials as in the case of our running example discussed in Example~\ref{ex_sir}-\ref{exex_multi_sir}, the envelope from line~\ref{alg_line_env} can be efficiently computed by Algorithm~\ref{algorithm_envelope}. This makes our approach particularly suited to models from the field of biochemistry. Note that Algorithm~\ref{algorithm_envelope} replaces any product of uncertainties by a new uncertainty and bounds it by the maximal value of the replaced product similarly to the discussion following Example~\ref{ex_sir_atomic_uu_2}.

\emph{Computation of $\Psi$.} We conclude the section by discussing a rigorous and a heuristic implementation of line~\ref{alg_line_psi}. In the case of the former, one has to combine Theorem~\ref{thm_suff_cond_for_suff_pont},~\ref{prop_lip} and \ref{thm_numeric} with a verified numerical ODE solver as~\cite{BM07} that provides apart from a numerical solution also an estimation of the underlying numerical error~\cite{Gear:1971}. Additionally to the numerical error, one has to account for the discretization error arising in the computation of $\Psi(\eps)$, where the idea is to evaluate $(\Psi(\eps))(\cdot)$ \emph{only at grid points} $\tha_l$ from $\mathcal{T}(\dt) = \{0,\dt, 2\dt, \ldots, T\}$ by computing the maximal and minimal value of $\tilde{\pi}^{\frut}_A(\tha_l)$ from~(\ref{eq_tilde_pi}) for all $A \in \cals$ and $\tha_l \in \mathcal{T}(\dt)$.

The following result can be proven.

\begin{theorem}\label{thm_num_odes}
For any positive function $\eps < V^0$ and $\xi \in (0;1)$, the function $t \mapsto (\Psi(\eps))(t)$ can be approximated with precision $\xi$ by solving $4 |\cals| \Lambda T \xi^{-1}$ ODE systems of size $|\cals|$, where
\begin{multline}\label{eq_lambda}
\Lambda \geq \max\big\{ \norm{\tilde{h}(t,\tilde{\pi},(u_{\calK},u_\cals,u_\calI))} \mid 0 \leq t \leq T, \\ \norm{\tilde{\pi}} \leq \normo{V(0)} \text{ and }
|u_i| \leq \sup_{0 \leq t \leq T} \frb_i(t) \big\}
\end{multline}
and $\tilde{h}$ denotes the Kolmogorov equations underlying the transition rates $(\hr_{B,C})_{B,C}$ from Algorithm~\ref{algorithm_main}.
\end{theorem}

\begin{table*}[tp]
\centering
    \begin{tabular}{rrrrrrrrr}
    \toprule
    \multicolumn{1}{c}{\empty} & \multicolumn{1}{c}{\empty} & \multicolumn{1}{c}{\empty} & \multicolumn{2}{c}{CORA} & \multicolumn{2}{c}{Algorithm~\ref{algorithm_main} for $\mathcal{T}(0.04)$} & \multicolumn{2}{c}{Algorithm~\ref{algorithm_main} for $\mathcal{T}(0.03)$} \\
    \cmidrule(l){4-5} \cmidrule(l){6-7} \cmidrule(l){8-9}
    $D$ & $|\cals|$ & $|\calK|$ & Run time & $\stackrel{\text{\normalsize Bound on}}{\sup_t \norm{\mathcal{E}(t)}}$ & Run time & $\stackrel{\text{\normalsize Bound on}}{\sup_t \norm{\mathcal{E}(t)}}$ & Run time & $\stackrel{\text{\normalsize Bound on}}{\sup_t \norm{\mathcal{E}(t)}}$ \\
    \midrule
%    1 &  3 & 3  & 3s  & 0.060 & 4m  & 0.159 & 3m  & 0.163 \\
%    2 &  6 & 8  & 7s  & 0.173 & 4m  & 0.124 & 6m  & 0.130 \\
%    3 &  9 & 15 & --- & --- & 7m  & 0.110 & 10m & 0.116 \\
%%    3 &  9 & 15 & 13s & 0.322 & 7m  & 0.110 & 10m & 0.116 \\
%    4 & 12 & 24 & ---    & ---   & 12m & 0.101 & 15m & 0.109 \\
%    5 & 15 & 35 & ---    & ---   & 17m & 0.097 & 23m & 0.103 \\
%    6 & 18 & 48 & ---    & ---   & 22m & 0.095 & 28m & 0.101 \\

    1 &  3 & 3  & 0m  & 0.187 & 2m  & 0.158 & 3m  & 0.163 \\
    2 &  6 & 8  & 1m  & 0.238 & 4m  & 0.124 & 6m  & 0.130 \\
    3 &  9 & 15 & 1m  & 0.296 & 8m & 0.109 & 11m & 0.116 \\
    4 & 12 & 24 & 2m  & 0.377 & 13m & 0.100 & 17m & 0.109 \\
    5 & 15 & 35 & 4m  & 0.494 & 18m & 0.096 & 24m & 0.103 \\
    6 & 18 & 48 & 5m & 0.694   & 19m & 0.091 & 29m & 0.101 \\
    7 & 21 & 63 & --- & ---   & 30m & 0.091 & 43m & 0.094 \\
    8 & 24 & 80 & --- & ---   & 40m & 0.084 & 53m & 0.096 \\
    9 & 27 & 99 & --- & ---   & 50m & 0.087 & 65m & 0.096 \\
    10 & 30 & 120 & --- & ---   & 61m & 0.091 & 78m & 0.095 \\
    \bottomrule
    \end{tabular}
\caption{Results obtained by applying CORA and the heuristic implementation of Algorithm~\ref{algorithm_main} to the SIRS model~(\ref{eq_sir_case}). While CORA terminated for $D \leq 6$, no estimations could be obtained in the case of $D \geq 7$ due to out-of-memory errors. The heuristic implementation of Algorithm~\ref{algorithm_main}, instead, is slower than CORA but scales to larger systems and provides tight bounds.}\label{tab_case}
\end{table*}

\begin{proof}
We prove that we need to solve $2 |\cals| \Lambda T \xi^{-1}$ instances of~(\ref{eq_pont_cost}) and~(\ref{eq_pont_pi}), respectively. To this end, we first note that $\tilde{\pi}^{\frut}$ from~(\ref{eq_tilde_pi}) is absolutely continuous and has derivative $\tilde{h}(\cdot,\tilde{\pi}^\frut,\frut)$ almost everywhere. This yields
\[
\tilde{\pi}^{\frut}(t_2) - \tilde{\pi}^{\frut}(t_1) = \int_{t_1}^{t_2} \tilde{h}(s,\tilde{\pi}^\frut(s),\frut(s)) ds
\]
for any $0 \leq t_1 \leq t_2 \leq T$. Since $\norm{\tilde{\pi}^{\frut}(t)} \leq \normo{V(0)}$ for all $0 \leq t \leq T$, we infer that $\norm{\tilde{\pi}^{\frut}(t_2) - \tilde{\pi}^{\frut}(t_1)} \leq \Lambda |t_2 - t_1|$. This implies that we miss the actual value of $(\Psi_\cdot(\eps))(\cdot)$ by at most $\Lambda \dt$ if we compute the maximal and minimal value of $\tilde{\pi}^{\frut}_A(\tha_l)$ for all $A \in \cals$ and $\tha_l \in \mathcal{T}(\dt) = \{0,\dt, 2\dt, \ldots, T\}$. With this, we note that $\Lambda \dt \leq \xi$ implies $\dt \leq \xi / \Lambda$ which, in turn, induces $T / \dt = \Lambda T \xi^{-1}$ grid points. Since we need to compute the minimum and maximum value of $\tilde{\pi}^\fru_A(\tha_l)$ for all $A \in \cals$ and $\tha_l \in \mathcal{T}(\dt)$, this yields the claim.
\end{proof}

Apart from the rigorous implementation, our approach allows for a heuristic implementation. Here, instead of using a verified numerical ODE solver, one invokes a standard ODE solver in which the numerical error is minimized heuristically by varying the integration step size~\cite{Gear:1971}. Similarly, one accounts heuristically for the discretization underlying $\mathcal{T}(\dt)$ by gradually refining an initially coarse discretization of $[0;T]$ until the approximations of $\Psi$ are reasonably close.

We wish to point out that both implementations naturally apply to parallelization because each single ODE system can be solved independently from the others.

\section{Numerical Evaluation}\label{sec_case_studies}

In this section we study the potential of our technique by applying it on the multi-class SIRS model from Section~\ref{sec_suboptimal}. To this end, we implemented an experimental prototype of the heuristic version from Section~\ref{sec_impl} in Matlab by relying on the (non-verified) numerical ODE solver provided by the Matlab command \texttt{ode45s}. The heuristic implementation was compared with the state-of-the-art reachability analysis tool CORA~\cite{Althoff2015a} that covers nonlinear ODE systems with multiplicative uncertainty functions. %(We did not compare to Flow$^*$~\cite{DBLP:conf/cav/ChenAS13} which supports time-varying uncertainties as well because it has been compared to CORA in~\cite{Althoff2013a}.)

All experiments were performed on a 3.2 GHz Intel Core i5 machine with 8 GB of RAM. The Matlab solver \texttt{ode45s} was invoked with its default settings. For CORA, instead, the time step was set to $0.004$, while the expert settings were chosen as in the nonlinear tank example from the CORA manual~\cite{CoraManual}. The main findings are as follows.
\begin{itemize}
    \item The bounds obtained by the heuristic implementation are tight;
    \item CORA is faster than the heuristic implementation in the case of smaller systems;
    \item The heuristic implementation scales to models that cannot be covered by CORA.
\end{itemize}

The above confirms the discussion from Section~\ref{sec_impl} concerning the complexity of our approach. Indeed, for smaller ODE systems, our approach is inferior to CORA because of the discretization of the time interval $[0;T]$. However, for abstraction approaches such as CORA it is computationally prohibitive to obtain tight over-approximations for larger nonlinear systems in general. Instead, our approach requires to solve $4|\cals| \Lambda T \xi^{-1}$ ODE systems of size $|\cals|$, see Theorem~\ref{thm_num_odes}. Moreover, at least as far as the heuristic implementation is considered, we were able to obtain tight bounds. In summary, we argue that our technique has the potential to complement state-of-the-art over-approximation approaches.

\emph{Multi-class SIRS Model.} The global dynamics underlying~(\ref{ex_multi_sir}) is given by
\begin{align}\label{eq_sir_case}
\dot{V}^{u_\calK}_{S_\nu} & = - \sum_\mu (\hk_{\alpha_{\nu,\mu}} + u_{\alpha_{\nu,\mu}}) V^{u_\calK}_{S_\nu} V^{u_\calK}_{I_\mu} + (\hk_{\gamma_\nu} + u_{\gamma_\nu}) V^{u_\calK}_{R_\nu} \nonumber \\
\dot{V}^{u_\calK}_{I_\nu} & = - (\hk_{\beta_\nu} + u_{\beta_\nu}) V^{u_\calK}_{I_\nu} + \sum_\mu (\hk_{\alpha_{\nu,\mu}} + u_{\alpha_{\nu,\mu}}) V^{u_\calK}_{S_\nu} V^{u_\calK}_{I_\mu} \nonumber \\
\dot{V}^{u_\calK}_{R_\nu} & = - (\hk_{\gamma_\nu} + u_{\gamma_\nu}) V^{u_\calK}_{R_\nu} + (\hk_{\beta_\nu} + u_{\beta_\nu}) V^{u_\calK}_{I_\nu}
\end{align}
where $u_{\alpha_{\nu,\mu}}$, $u_{\beta_\nu}$, $u_{\gamma_\nu}$ are \emph{time-varying uncertain} functions and $1 \leq \nu, \mu \leq D$ for some $D \geq 1$. The system has $3D$ ODE variables and $D^2 + 2D$ uncertainties. As discussed in Section~\ref{sec_suboptimal}, the transition rates (\ref{ex_sir_multi_rates}) define a function $\Psi$ as in Theorem~\ref{thm_ctmc_estimator} that can be used to estimate the function $\calE$ underlying~(\ref{eq_sir_case}). % from Definition~\ref{def_max_dev} that underlies~(\ref{eq_sir_case}).

In our experiments, we randomly chose $\hk_{\alpha_{\nu,\mu}} \equiv 1.00$, $\hk_{\beta_\nu} \equiv 2.00$, $\hk_{\gamma_\nu} \equiv 3.00$ and $V^{u_\calK}_{S_\nu}(0) = 4.00 + 0.10 (\nu - 1)$, $V^{u_\calK}_{I_\nu}(0) = V^{u_\calK}_{R_\nu}(0) = 1.00$ for all $1 \leq \nu,\mu \leq D$. The time horizon was set to $T = 3.00$, while all parameters were subject to uncertainties with modulus not higher than $\zeta = 0.03$, i.e., $\frb_\theta(t) = \zeta$ for all $\theta \in \calK$ and $0 \leq t \leq T$.

Table~\ref{tab_case} and Figure~\ref{fig_reachset_bounds} summarize our findings. With increasing $D$, the tightness of the bounds provided by CORA decreases while the corresponding running times increase. In principle, the tightness can be improved by using stricter parameters (e.g., by decreasing the step size). This, however, increases the time and space requirements. Likewise, the over-approximation of larger models requires more resources in general. On our machine, for instance, $D \geq 7$ or time steps below $0.004$ led to out-of-memory errors. The heuristic implementation of Algorithm~\ref{algorithm_main}, instead, scales to larger instances of the running example and provides tight bounds. Indeed, since the $\Lambda$ from~(\ref{eq_lambda}) can be chosen as $6 D$ in the case of the ODE system~(\ref{ex_multi_sir}), Theorem~\ref{thm_num_odes} implies that one has to solve $4|\cals| \Lambda T \xi^{-1} = 216 D^2 \xi^{-1}$ ODE systems of size $3D$ in order to guarantee that a numerical approximation of $\Psi(\eps)$ from Theorem~\ref{thm_ctmc_estimator} misses the actual value of $\Psi(\eps)$ by at most $\xi > 0$.
We approximated the values of $\Psi$ from Algorithm~\ref{algorithm_main} using discretizations $\mathcal{T}(0.04)$ and $\mathcal{T}(0.03)$, where
\[
\mathcal{T}(\dt) = (\{ l \dt \mid l \geq 0 \} \cup \{3\}) \cap [0;3]
\]
The run times account for the computation of the sequence $(\eps^{(k)})_k$ from Algorithm~\ref{algorithm_main} with $\eta = 10^{-4}$. In agreement with Theorem~\ref{thm_num_odes}, the running times exhibit a polynomial growth. Moreover, discretizations $\mathcal{T}(0.04)$ and $\mathcal{T}(0.03)$ induce bounds that are reasonably close.

\begin{figure}[!t]
\centering
\hspace{-1.6cm}
\subfloat{%
\includegraphics[width=0.18\textwidth]{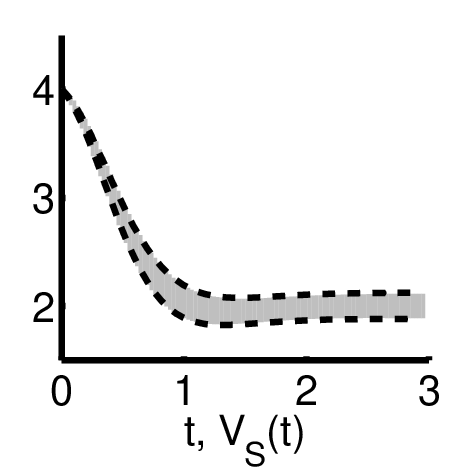}
}
\hspace{-0.4cm}
\subfloat{%
\includegraphics[width=0.18\textwidth]{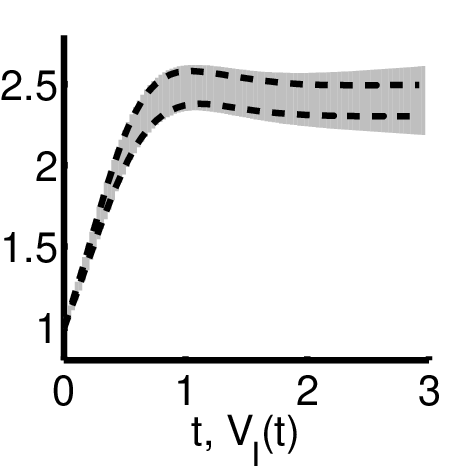}
}
\hspace{-0.4cm}
\subfloat{%
\includegraphics[width=0.18\textwidth]{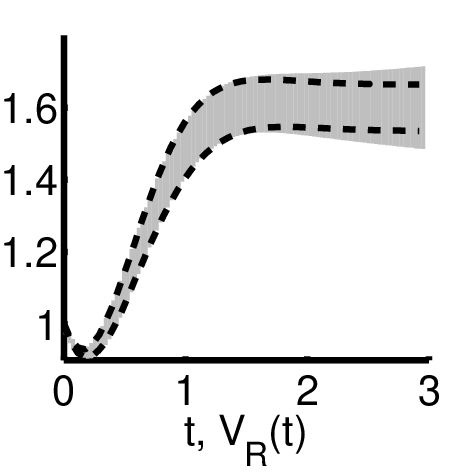}
}
\hspace{-1.3cm}
\caption{Reachable set estimation of~(\ref{eq_sir_case}) in case of $D = 1$ and $\zeta = 0.03$. The gray area visualizes the estimation of CORA, while the dotted lines depict the one underlying the heuristic version of Algorithm~\ref{algorithm_main} for $\mathcal{T}(0.04)$. It can be seen that both techniques provide tight bounds, even though those of CORA appear to become less strict as time increases.}\label{fig_reachset_bounds}
\end{figure}

%The ODE systems~(\ref{eq_pont_cost}) and~(\ref{eq_pont_pi}) were solved using the Matlab command \texttt{ode45s} which implements a Runge-Kutta method of order four with adaptive step size~\cite{Gear:1971}.

\section{Discussion}\label{sec_disc}

While Pontryagin's principle and its extensions to systems with uncertain parameters have been used in the context of reachability analysis~\cite{dsn16BortolussiGast,doi:10.1137/0325010}, to the best of our knowledge, the principle has not been applied in the context of formal over-approximation of a general class of nonlinear ODE systems. This is because the principle is in general only a necessary condition for optimality, while its strict versions~\cite{KAMIEN,4522600} require concavity or convexity which is rarely satisfied by nonlinear ODE models. Additionally, Pontryagin's principle induces in general a differential inclusion which can only be solved under additional assumptions~\cite{NumericDiffInc1,DBLP:conf/qest/Bortolussi11}. The present work addresses those problems by a) approximating the original nonlinear ODE system by a family of linear Kolmogorov equations~(\ref{eq_atomic_ode_with_uu}) with multiplicative and additive uncertainties and; by b) showing that each family member can be over-approximated tightly and efficiently using a modified version of the strict version of Pontryagin's principle~\cite{KAMIEN}.

The proposed approach is complementary to existing approximation techniques. Indeed, while it is less efficient than approaches that are based on monotonic systems and differential inequalities~\cite{Ramdani2008,RAMDANI2010263,Scott201393,DBLP:journals/tac/TschaikowskiT16}, it may provide tighter bounds because it relies on optimal control theory. Instead, for approaches based on abstraction~\cite{DBLP:conf/cav/ChenAS13,Althoff2015a} and the Hamilton-Jacobi equation~\cite{DBLP:journals/automatica/Lygeros04,mitchell2005time}, in general it becomes computationally prohibitive to obtain tight over-approximations for larger nonlinear systems~\cite{Althoff2013a,DBLP:conf/cav/Duggirala016}. Another point worth stressing is that many approaches applicable to nonlinear ODE models assume time-invariant uncertain parameters and uncertain initial conditions, while the present technique focusses on nonlinear ODE systems with time-varying uncertain parameters and fixed initial conditions. Since the proposed approximation technique relies on the availability of a concrete nominal solution, a direct extension to sets of initial conditions seems not to be possible. This notwithstanding we wish stress that it is particulary suited to systems biology where initial concentrations can be measured while reaction rates are often difficult to obtain and may vary with time.

%While abstraction techniques can cover many practical models, in general it is computationally prohibitive to obtain tight over-approximations for larger nonlinear systems~\cite{Althoff2013a,DBLP:conf/cav/Duggirala016}. We shall refer to this phenomenon as the curse of dimensionality.

\section{Conclusion}\label{sec_conclusion}

In this work we presented an over-approximation technique for nonlinear ODE systems with time-varying uncertain parameters. Our approach provides verifiable bounds in terms of a family of linear Kolmogorov equations with uncertain additive and multiplicative time-varying parameters. To ensure efficient computation and tight estimations, we have established, to the best of our knowledge, a novel efficiently computable solution technique for a class of inhomogeneous continuous time Markov decision processes.

The presented over-approximation technique is efficient and can be expected to provide tight bounds because it relies on optimal control theory and allows for an algorithmic treatment in the case where the ODE system is given by multivariate polynomials. This makes it particularly suited to models from (bio)chemistry.

By comparing our approach with a state-of-the-art over-approximation technique in the context of the multi-class SIRS model from epidemiology~\cite{dsn16BortolussiGast}, we have provided numerical evidence for the potential of our approach. The most pressing line of future work is the development of a tool which provides a rigorous implementation of the technique.

\section*{Acknowledgement} The author thanks Mirco Tribastone for helpful discussions. Parts of the work have been conducted when the author was with IMT Lucca. The author is supported by a Lise Meitner Fellowship that is funded by the Austrian Science Fund (FWF) under grant number M 2393-N32 (COCO).

\bibliographystyle{IEEEtran}

\bibliography{root}
\vspace{-2.5cm}

\begin{IEEEbiography}[{\includegraphics[width=1in,height=1.25in,clip,keepaspectratio]{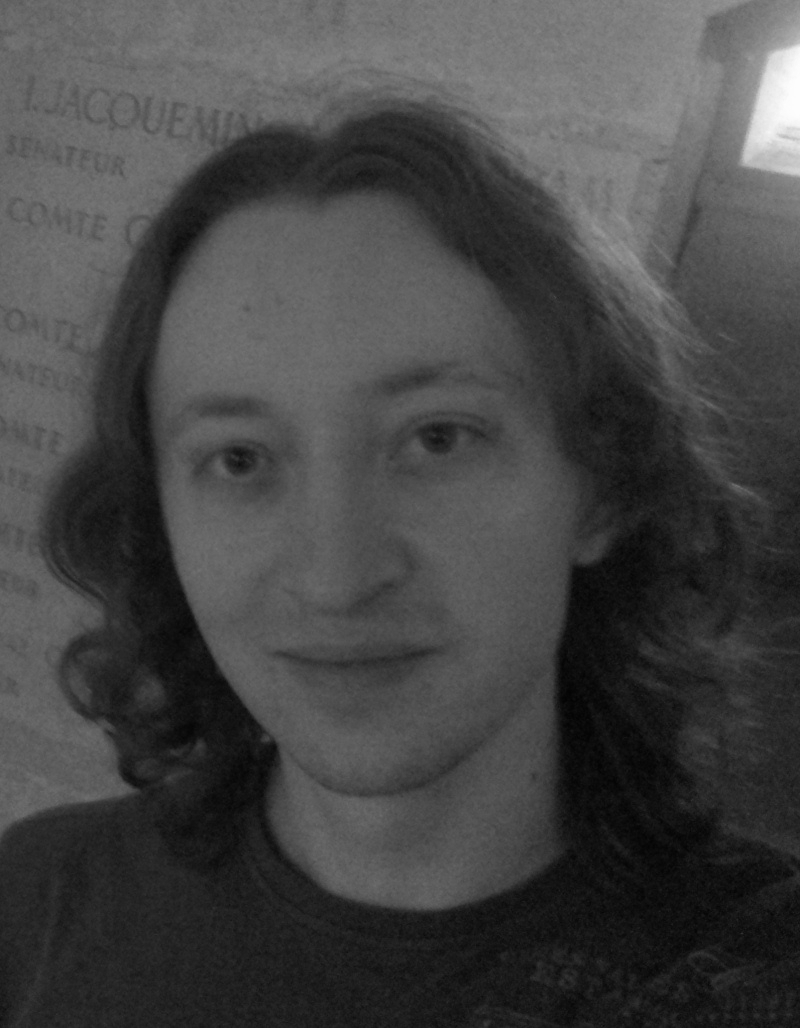}}]{Max Tschaikowski}
is a Lise Meitner Fellow at the Technische Universit\"{a}t Wien, Austria. Prior to it, he was a non-tenure-track Assistant Professor at IMT Lucca, Italy, a Research Fellow at the University of Southampton, UK, and a Research Assistant at the LMU in Munich, Germany. He was awarded a Diplom in mathematics and a Ph.D. in computer science by the LMU in 2010 and 2014, respectively. His research focusses on the construction, reduction and verification of quantitative models.
\end{IEEEbiography}

\end{document}